\newcommand{\solve}{Solve}
\newcommand{\matching}{Matching}
\newcommand{\lrules}{\Delta_\Lambda}
\newcommand{\grules}{\Delta_{G}}
\newcommand{\At}{\mathit{Atoms}}
\newcommand{\bottom}{\bot}
\newcommand{\A}{\mathcal{A}}
\newcommand{\Q}{\mathcal{Q}}
\def \R {\mathcal{R}}
\newcommand{\leqTF}{\sqsubseteq}
\newcommand{\geqTF}{\sqsupseteq}
\newcommand{\LTA}{LTA}
\newcommand{\cond}{\Leftarrow}
\newcommand{\Pred}{{\cal P}}
\newcommand{\Lang}{\mathcal{L}}
\newcommand{\eval}{eval}
\newcommand{\desc}{\R^*}
\newcommand{\F}{{\cal F}}
\newcommand{\X}{{\cal X}}
\newcommand{\FI}{{\cal F}_{\bullet}}
\newcommand{\FNI}{{\cal F}_{\circ}}
\newcommand{\TF}{{\cal T(F)}}
\newcommand{\TFX}{{\cal T(F, X)}}
\newcommand{\TFNIX}{{\cal T(\FNI, X)}}
\newcommand{\TFIX}{{\cal T(\FI, X)}}
\newcommand{\TFI}{{\cal T(\FI)}}
\newcommand{\TFAL}{{\cal T}(\F,\mathit{Atoms}(\Lambda))}
\newcommand{\TFQ}{{\cal T(F \cup Q)}}
\newcommand{\rw}{\rightarrow}
\newcommand{\simp}{\leadsto}
\newcommand{\nr}{E}
\newcommand{\comp}{{\cal C}}
\newcommand{\aapprox}{{\cal A}_{\R, \nr}}
\newcommand{\aaex}{{\cal A}_{\R}}
\def\build#1_#2^#3{\mathrel{
    \mathop{\kern 0pt#1}\limits_{#2}^{#3}}}
\def\CM#1{\build\hbox to 10mm {\rightarrowfill}_{}^{CM#1}}
\newcommand{\rws}[3]{\mathrel{{\build{\rightarrow}_{#1}^{#2}}\mskip-2mu_{#3}}}
\newcommand{\pos}{{\cal P}os}
\newcommand{\var}{{\cal V}ar}
\newcommand{\rwR}{\rws{}{}{\R}}
\newcommand{\rwA}{\rws{}{}{\A}}
\newfont{\amstoto}{msbm10}
\newcommand{\NN}{\mbox{\amstoto\char'116}}
\newcommand{\sep}{\; | \;}
\newcommand{\aut}{\langle \F, \Q, \Q_f, \Delta \rangle} 
\newcommand{\opA}{op^{\#}}
\newcommand{\plA}{+^{\#}}
\newcommand{\msA}{-^{\#}}
\newcommand{\FpA}{\FI^{\#}}
\newcommand{\evA}{eval^{\#}}
\title{Tree Regular Model Checking for\\ Lattice-Based Automata}
\author{Thomas Genet \inst{1} \and Tristan Le Gall \inst{2} \and Axel Legay \inst{1} \and Val\'erie Murat \inst{1}}
\institute{INRIA/IRISA, Rennes \and CEA, LIST, Centre de recherche de Saclay}
\begin{document}
\maketitle
\sloppy

\begin{abstract}
{\it Tree Regular Model Checking} (TRMC) is the name of a family of
techniques for analyzing infinite-state systems in which states are
represented by terms, and sets of states by Tree Automata (TA). The
central problem in TRMC is to decide whether a set of bad states is
reachable. The problem of computing a TA representing (an
over-approximation of) the set of reachable states is undecidable, but
efficient solutions based on completion or iteration of tree
transducers exist.

Unfortunately, the TRMC framework is unable to efficiently capture
both the complex structure of a system and of some of its features. As
an example, for JAVA programs, the structure of a term is mainly
exploited to capture the structure of a state of the system. On the
counter part, integers of the java programs have to be encoded with
Peano numbers, which means that any algebraic operation is potentially
represented by thousands of applications of rewriting rules.

In this paper, we propose Lattice Tree Automata (LTAs), an extended
version of tree automata whose leaves are equipped with lattices. LTAs
allow us to represent possibly infinite sets of interpreted terms. Such
terms are capable to represent complex domains and related operations
in an efficient manner. We also extend classical Boolean operations to
LTAs. Finally, as a major contribution, we introduce a new
completion-based algorithm for computing the possibly infinite set of
reachable interpreted terms in a finite amount of time.
\end{abstract}

\section{Introduction}

Infinite-state models are often used to avoid potentially artificial
assumptions on data structures and architectures, e.g. an artificial
bound on the size of a stack or on the value of an integer
variable. At the heart of most of the techniques that have been
proposed for exploring infinite state spaces, is a symbolic
representation that can finitely represent infinite sets of states.

In early work on the subject, this representation was domain specific,
for example linear constraints for sets of real
vectors\,\cite{WB98}. For several years now, the idea that a generic
automata-based representation for sets of states could be used in many
settings has gained ground starting with finite-word
automata\,\cite{BLW03,BLW04,KMMPS97,AJNd03}, and then moving to the
more general setting of Tree Regular Model Checking
(TRMC)\,\cite{AJMd02,BT02,ALRd05}. In TRMC, states are represented by
trees, set of states by tree automata, and behavior of the system by
rewriting rules or tree transducers. Contrary to specific approaches,
TRMC is generic and expressive enough to describe a broad class of
communication protocols\,\cite{ALRd05}, various C
programs\,\cite{BHRV06b} with complex data structures, multi-threaded
programs, and even cryptographic
protocols~\cite{GenetK-CADE00,avispa-site}. Any Tree Regular Model
Checking approach is equipped with an acceleration algorithm to
compute possibly infinite sets of states in a finite amount of
time. Among such algorithms, one finds completion by equational
abstraction\,\cite{GR10} that computes successive automata obtained by
application of the rewriting rules, and merge intermediary states
according to an equivalence relation to enforce the termination of the
process.

In \cite{BoichutGJL-RTA07}, the authors proposed an exact translation
of the semantic of the Java Virtual Machine to tree automata and
rewriting rules. This translation permits to analyze java programs
with classical Tree Regular Model checkers. One of the major
difficulties of this encoding is to capture and handle the two-side
infinite dimension that can arise in Java programs. Indeed, in such
models, infinite behaviors may be due to unbounded calls to method and
object creation, or simply because the program is manipulating
unbounded data such as integer variables. While multiple infinite
behaviors can be over-approximated with completion and equational
abstraction\,\cite{GR10}, their combinations may require the
use of artificially large-size structures. As an example in
\cite{BoichutGJL-RTA07}, the structure of a configuration is
represented in a very concise manner as the structure of terms is
mainly designed to efficiently capture program counters, stacks,
.... On the other hand, integers and their related operations have to
be encoded in Peano arithmetic, which has an exponential impact on the
size of automata representing sets of states as well as on the
computation process. As an example, the addition of $x$ to $y$
requires the application of $x$ rewriting rules.

A solution to the above problem would be to follow the solution of
Kaplan\,\cite{DBLP:conf/rta/KaplanC89}, and represent integers in
bases greater or equal to $2$, and the operations between them in the
alphabet of the term directly. In such a case, the term could be
interpreted and returns directly the result of the operation without
applying any rewriting rule. The study of new Tree Regular Model
Checking approaches for such interpreted terms is the main objective
of this paper. Our first contribution is the definition of {\em
  Lattice Tree Automata} ($\LTA$), a new class of tree automata that
is capable of representing possibly infinite sets of interpreted
terms. Roughly speaking, $\LTA$ are classical Tree Automata whose
leaves may be equipped with lattice elements to abstract possibly
infinite sets of values. Nodes of $\LTA$ can either be defined on an
uninterpreted alphabet, or represent lattice operations, which will
allows us to interpreted possibly infinite sets of terms in a finite
amount of time. We also propose a study of all the classical
automata-based operations for $\LTA$. The model of $\LTA$ is not
closed under determinization. In such case, the best that can be done
is to propose an over-approximation of the resulting automaton through
abstract interpretation. As a third contribution, we propose a new
acceleration algorithm to compute the set of reachable states of
systems whose states are encoded with interpreted terms and sets of
states with $\LTA$. Our algorithm extends the classical completion
approach by considering conditional term rewriting systems for
lattices. We show that dealing with such conditions requires to merge
existing completion algorithm with a solver for abstract domains. We
also propose a new type of equational abstraction for lattices, which
allows us to enforce termination in a finite amount of time. Finally,
we show that our algorithm is correct in the sense that it computes an
over-approximation of the set of reachable states. This latter
property is only guaranted providing that each completion step is
followed by an evaluation operation. This operation, which relies on a
widening operator, add terms that may be lost during the completion
step. Finally, we briefly describe how our solution can drastically
improve the encoding of Java programs in a TRMC environment.

\paragraph{\bf Related Work} 

This work is inspired by \cite{DBLP:conf/sas/GallJ07}, where the
authors proposed to use finite-word lattice automata to solve the
Regular Model Checking problem. Our major differences are that (1) we
work with trees, (2) we propose a more general acceleration
algorithm, and (3) we do consider operations on lattices while they
only consider to label traces with lattices without permitting to
combine them. Some Regular Model Checking approaches can be find in
\cite{ADR07,BLW03,AHDHR08,Bouajjani:2006:ART:1706621.1706712}. However,
none of them can capture the two infinite-dimensions of complex
systems in an efficient manner. Other models, like modal
automata\,\cite{BFJLLT11} or data trees\,\cite{FS11,GMW10}, consider
infinite alphabets, but do not exploit the lattice structure as in our
work. Lattice~(-valued) automata~\cite{KL07}, whose transitions are
labelled by lattice elements, map words over a finite alphabet to a
lattice value. Similar automata may define fuzzy tree
languages~\cite{Esik:2007:FTA:1243507.1243664}. Other verification of
particular classes of properties of Java programs with interpreted
terms can be found in \cite{OBEG10}.

\section{Backgrounds}


\paragraph{Rewriting Systems and Tree Automata.}

Let $\F$ be a finite set of functional symbols, where each symbol is
associated with an arity, and let $\X$ be a countable set of {\em
  variables}. $\TFX$ denotes the set of {\em terms} and $\TF$ denotes
the set of {\em ground terms} (terms without variables). The set of
variables of a term $t$ is denoted by $\var(t)$. The set of functional
symbols of arity $n$ is denoted by $\F^{n}$.  A {\em position} $p$ for
a term $t$ is a word over $\NN$. The empty sequence $\varepsilon$
denotes the top-most position. We denote by $Pos(t)$ the set of
position of a term $t$.  If $p \in \pos(t)$, then $t|_p$ denotes the
subterm of $t$ at position $p$ and $t[s]_p$ denotes the term obtained
by replacement of the subterm $t|_p$ at position $p$ by the term $s$.

A {\em Term Rewriting System} (TRS) $\R$ is a set of {\em rewrite
  rules} $l \rw r$, where $l, r \in \TFX$, and $\var(l) \supseteq
\var(r)$.  A rewrite rule $l \rw r$ is {\em left-linear} if each
variable of $l$ occurs only once in $l$.  A TRS $\R$ is left-linear if
every rewrite rule $l \rw r$ of $\R$ is left-linear.

We now define Tree Automata ($TA$ for short) that are used to
recognize possibly infinite sets of terms. Let $\Q$ be a finite set of
symbols of arity $0$, called {\em states}, such that $\Q \cap \F=
\emptyset$. The set of {\em configurations} is denoted by $\TFQ$. A
         {\em transition} is a rewrite rule $c \rightarrow q$, where
         $c$ is a configuration and $q$ is a state.  A transition is
         {\em normalized} when $c = f(q_1, \ldots, q_n)$, $f \in \F$
         is of arity $n$, and $q_1, \ldots, q_n \in \Q$.  A bottom-up
         nondeterministic finite tree automaton (tree automaton for
         short) over the alphabet $\F$ is a tuple $\A= \langle \Q,\F,
         \Q_F,\Delta \rangle$, where $\Q_F \subseteq \Q$ is the set of
         final states, $\Delta$ is a set of normalized transitions.

The transitive and reflexive {\em rewriting relation} on $\TFQ$
induced by $\Delta$ is denoted by $\rightarrow_{\A}^*$. The tree
language recognized by $\A$ in a state $q$ is $\Lang{}(\A,q) = \{t \in
\TF \sep t \rw^*_\A q \}$. We define $\Lang{}(\A) = \bigcup_{q \in
  \Q_F} \Lang{}(\A, q)$.
  
\paragraph{Lattices, atomic lattices, Galois connections.}

A partially ordered set ($\Lambda,\sqsubseteq$) is a lattice if it
admits a \textit{smallest element} $\bot$ and a \textit{greatest
  element} $\top$, and if any finite set of elements $X \subseteq
\Lambda$ admits a \textit{greatest lower bound (glb)} $\sqcap X$ and a
\textit{least upper bound (lub)} $\sqcup X$ . A lattice is complete if
the \textit{glb} and \textit{lub} operators are defined for all
possibly infinite subset of $\Lambda$.  An element $x$ of a lattice
($\Lambda,\sqsubseteq$) is an atom if it is minimal, \textit{i.e.}
$\bot \sqsubset x \wedge \forall y \in \Lambda : \bot \sqsubset y
\sqsubseteq x \Rightarrow y = x$. The set of atoms of $\Lambda$ is
denoted by $Atoms(\Lambda)$. A lattice ($\Lambda,\sqsubseteq$) is
atomic if all element $x \in \Lambda$ where $x \neq \bot$ is the least
upper bound of atoms, \textit{i.e.} $x = \bigsqcup\lbrace a \vert a
\in Atoms(\Lambda) \wedge a \sqsubseteq x\rbrace$.

Considered two lattices $(C,\sqsubseteq_C)$ (the concrete domain) and
$(A,\sqsubseteq_A)$ (the abstract domain). We say that there is a
\emph{Galois connection} between the two lattices if there are two
monotonic functions $\alpha : C \rightarrow A$ and $\gamma : A
\rightarrow C$ such that : $\forall x \in C, y \in A$, $ \alpha(x)
\sqsubseteq_A y$ if and only if $x \sqsubseteq_C \gamma(y)$.  As an
example, sets of integers $(2^\mathbb{Z},\subseteq)$ can be abstracted
by the atomic lattice $(\Lambda,\sqsubseteq)$ of intervals, whose
bounds belong to $\mathbb{Z} \cup \{ -\infty,+\infty\}$) and whose
atoms are of the form $[x,x]$, for each $x \in \mathbb{Z}$. Any
operation $op$ defined on a concrete domain $C$ can be lifted to an
operation $\opA$ on the corresponding abstract domain $A$, thanks to
the Galois connection.

\section{Lattice Tree Automata}
\label{sec:LTA}

In this section, we first explain how to add elements of a concrete
domain into terms, which has been defined in
\cite{DBLP:conf/rta/KaplanC89}, and how to derive an abstract domain
from a concrete one. Then we propose a new type of tree automata
recognizing terms with elements of a lattice and study its
properties.\\


\subsection{Discussion}
We first discuss the reason for which we chose to consider tree
automata with leaves that are labelled by elements of an atomic
lattice. We remind that the main goal of this work is to extend the
TRMC approach to tree automata that represent sets of interpreted
terms. We may assume that the interpreted terms of a given set are
similar to each other, for example $\{f(1),f(2),f(3),f(4)\}$. We can
encode naively this set of terms by a tree automaton with the
transitions~: $1 \to q, 2 \to q, 3 \to q, 4 \to q, f(q) \to q_f$. This
naive encoding is quite inefficient, and we would prefer to label the
leaves of the tree not by integers, but by a set of integers. The new
tree automata has only two transitions~: $\{1,2,3,4\}\to q, f(q) \to
q_f$.

This is the reason why we considered the notion of $\LTA$~: In there,
sets of integers is just a particular lattice. By considering tree
automata with a generic lattice, we can also improve the efficiency of
the approach. General sets of integers are indeed hard to handle, and
we often only need an over-approximation of the set of reachable
states. That is why we prefer to label the leaves of the tree by
elements of an abstract lattice $\Lambda$ such as the lattice of
intervals. The Galois connection ensures that the concrete operations
(e.g. $+, \times$) on integers have an abstract semantics, and that
the approximations are sound.

In order to simplify the notations, we did not emphasize in this paper
the abstract interpretation aspects. For example, when we say that
``the concrete domain is $\mathcal{D} = \mathbb{N}$, the abstract
domain is $(\Lambda,\sqsubseteq)$'', it really means that the concrete
lattice is $(2^\mathbb{N},\subseteq)$ and that there is a Galois
connection with $(\Lambda,\sqsubseteq)$. In the examples, we apply
implicitely the concretization function, wich is the identity (if the
abstract lattice is the lattice of intervals). We can also define the
$\LTA$ even when there is no Galois connection between the concrete
lattice and the abstract one. In this case, the function $\evA$ must
be defined so that we still have over-approximation of the concrete
operations.

There are two reasons why we consider only atomic abstract lattices,
and why the language of an $\LTA$ is defined on tarms built with the
atoms rather that with any elements of the lattice. The first one is
that we are mostly interested in representing sets of integers. Since
the atoms are the integers, the semantics of a lambda transition is to
recognize a set of integers. The other reason is a technical one~: It
ensures that when we transform a $\LTA$ according to a partition, we
do not change the recognized language since the set of atoms are
preserved by this transformation.


\subsection{Interpreted Symbols and Evaluation}

In what follows, elements of a concrete and possibly infinite domain
$\mathcal{D}$ will be represented by a set of {\em interpreted}
symbols $\FI$. The set of symbols is now denoted by $\F = \FNI \cup
\FI$, where $\FNI$ is the set of {\em passive} (uninterpreted)
symbols.  The set of {\em interpreted} symbols $\FI$ is composed of
elements of $\mathcal{D}$ (\textit{i.e} $\mathcal{D} \subseteq \FI$)
whose arity is 0, and is also composed of some predefined operations
$f$ : $\mathcal{D}^{n} \rightarrow \mathcal{D}$, where $f \in
\F^{n}$. For example, if $\mathcal{D} = \mathbb{N}$, then $\FI$ can be
$\mathbb{N} \cup \lbrace +, -, *\rbrace$.  Passive symbols can be seen
as usual non-interpreted functional operators, and interpreted symbols
stand for {\em built-in} operations on the domain $\mathcal{D}$.

The set $\TFI$ of terms built on $\FI$ can be evaluated by using an
eval function $\eval: \mathcal{T}(\FI) \rightarrow \mathcal{D}$.  The
purpose of $\eval$ is to simplify a term using the built-in operations
of the domain $\mathcal{D}$.  The $\eval$ function naturaly extends to
$\TF$ in the following way: $\eval(f(t_1, \ldots,t_n)= f(\eval(t_1),
\ldots, \eval(t_n))$ if $f\in \FNI$ or $\exists i=1\ldots n: t_i
\not\in\TFI$. Otherwise, $f(t_1,\ldots,t_n) \in \TFI$ and the
evaluation returns an element of $\mathcal{D}$.

To deal with infinite alphabets (e.g. $\mathbb{R}$ or $\mathbb{N}$),
we propose to replace the concrete domain $\mathcal{D}$ by an abstract
one $\Lambda$, linked to $\mathcal{D}$ by a Galois
connection. Moreover, we assume that $(\Lambda,\sqsubseteq)$ is an
\emph{atomic lattice} and that the built-in symbols are $\sqcup$ and
$\sqcap$, which arity is 2, and other symbols corresponding to the
abstraction of $\FI$.

Let $OP$ be the set of operations $op$ defined on $\mathcal{D}$, and
$OP^{\#}$ the set of corresponding operations $\opA$ defined on
$\Lambda$, we have that $\FI = \mathcal{D} \ \cup \ OP$, and the corresponding
abstract set is defined by $\FpA = \Lambda \ \cup \ OP^{\#} \ \cup
\ \lbrace \sqcup, \sqcap\rbrace$.  For example, let $I$ be the set of
intervals with bounds belonging to $\mathbb{Z} \cup \{
-\infty,+\infty\}$. The set $\FI = \mathbb{Z} \cup \lbrace +,-
\rbrace$ can be abstracted by $\FpA = I \cup \lbrace \plA, \msA,
\sqcup, \sqcap \rbrace$.  Terms containing some operators extended to
the abstract domain have to be evaluated, like explained in section
3.2 for the concrete domain. $\evA : \FpA \to \Lambda$ is the best
approximation of $\eval$ w.r.t. the Galois connection.

\begin{example}[$\evA$ function]
For the lattice of intervals on $\mathbb{Z}$, we have that:
\begin{itemize}
\item $\evA(i) = i$ for any interval $i$,
\item For any $f \in \lbrace \plA, \msA,
\sqcup, \sqcap \rbrace$  $\evA(f(i_1,i_2))$ is defined, given $\evA(i_1) = [a,b]$
  and $\evA(i_2) = [c,d]$, by:
 $\evA([a,b] \sqcup [c,d])= [min(a,c), max(b,d)]$,
 $\evA([a,b] \sqcap [c,d])= [max(a,c), min(b,d)]$ if $max(a,c)\leq min(b,d)$, else $\evA([a,b] \sqcap [c,d])= \bot$,
$ \evA([a,b]\plA [c,d])=[a + c , b + d]$,
$ \evA([a,b]\msA [c,d])=[a -d , b - c]$.

\end{itemize}
  
\end{example}

\subsection{The Lattice Tree Automata Model}

Lattice tree automata are extended tree automata recognizing terms
defined on $\FNI \cup \FI^{\#}$.

\begin{definition}[lattice tree automaton]
  A bottom-up non-deterministic finite tree automaton with lattice
  (lattice tree automaton for short, $\LTA$) is a tuple $\A= \langle
  \F = \FNI \cup \FI^{\#}$, $\Q, \Q_f,\Delta \rangle$, where $\F$ is a
  set of passive and interpreted symbols, $\Q$ and $\Q_{f}$ a set of
  state, $\Q_f \subseteq \Q$, and $\Delta$ is a set of normalized
  transitions.
\end{definition}

\noindent
The set of \textit{lambda transitions} is defined by $\lrules=\{
\lambda \rw q \sep \lambda \rw q \in \Delta \: \wedge \: \lambda \neq \bot \wedge \: \lambda \in
\Lambda\}$.  The set of \emph{ground transitions} is the set of other
transitions of the automaton, and is formally defined by $\grules=\{
f(q_1,\dots,q_n) \rw q \sep f(q_1,\dots,q_n)\rw q \in \Delta \: \wedge
\: q,q_1,\dots,q_n \in Q\}$.\\

We extend the partial ordering $\sqsubseteq$ (on $\Lambda$) on $\TF$:
\begin{definition}
\label{def:sqsubterm}
  Given $s,t\in \TF$, $s \leqTF t$ iff (1) $s \leqTF t$~(if both $s$
  and $t$ belong to $\Lambda$), (2) $eval(s) \leqTF eval(t)$~(if both
  $s$ and $t$ belong to $\TFI$), (3) $s = t$~(if both $s$ and $t$
  belong to $\FNI^{0}$), or (4) $s=f(s_1, \ldots, s_n)$, $t=f(t_1,
  \ldots, t_n)$, $f\in\FNI^{n}$ and $s_1 \leqTF t_1 \wedge \ldots
  \wedge s_n \leqTF t_n$.
\end{definition}

\begin{example}
$f(g(a, [1,5]) \leqTF f(g(a, [0,8])$, and $h([0,4] +^{\#} [2,6])
  \leqTF h([1,3] +^{\#} [1, 9])$.
\end{example}

In what follows we will omit $\#$ when it is clear from the
context. We now define the transition relation induced by an $\LTA$.
The difference with $TA$ is that a term $t$ is recognized by an $\LTA$
if $eval(t)$ can be reduced in the $\LTA$.
\begin{definition}[$t_{1} \rw_{\A} t_{2}$ for lattice tree automata]
\label{def:run}
Let $t_{1}, t_{2} \in \mathcal{T}(\F\cup \mathcal{Q})$.

$t_{1} \rw_{\A} t_{2}$ iff, for any position $p \in pos(t_{1})$ :

\begin{itemize}
\item if $t_{1}|_p \in \TFI$, there is a transition $\lambda
  \rightarrow q \in \mathcal{A}$ such that $eval(t_{1}|_p) \sqsubseteq
  \lambda$ and $t_{2} = t_{1}[q]_p$
\item if $t_{1}|_p = a$ where $a \in \FNI$, there is a transition $a
  \rightarrow q \in \mathcal{A}$ such that $t_{2} = t_{1}[q]_p$.
\item if $t_{1}|_p = f(s_{1}, \ldots, s_{n})$ where $f \in \F^{n}$ and
  $s_{1}, \ldots s_{n} \in \mathcal{T}(\F\cup \mathcal{Q})$, $\exists
  s'_{i} \in \mathcal{T}(\F\cup \mathcal{Q})$ such that $s_{i}
  \rw_{\A} s'_{i}$ and $t_{2} = t_{1}[f(s_{1}, \ldots, s_{i-1},
    s'_{i}, s_{i+1}, \ldots, s_{n})]_p$.

\end{itemize}

\end{definition}
$ \rw^{*}_{\A}$ is the reflexive transitive closure of
$\rw_{\A}$. There is a run from $t_1$ to $t_2$ if $t_{1} \rw^{*}_{\A}
t_{2}$.

The set $\TFAL$ denotes the set of ground terms built over
$(\F\setminus\Lambda)\cup Atoms(\Lambda)$.  Tree automata with lattice
recognize a tree language over $\TFAL$.

\begin{definition}[Recognized language]
\label{def:rec}
  The tree language recognized by $\A$ in a state $q$ is $\Lang(\A,q)
  = \{t \in \TFAL \, | \, \exists \ t' \ such \ that \ t \sqsubseteq
  t'\ and \ t' \rw^{*}_{\A} q\}$.  The language recognized by $\A$ is
  $\Lang(\A) = \bigcup_{q \in \Q_f} \Lang(\A, q)$.
  
\end{definition}

\begin{example}[Run, recognized language]
Let $\A= \langle \F = \FNI \cup \FI^{\#}$, $\Q, \Q_f,\Delta \rangle$
be an $\LTA$ where $\Delta = \lbrace [0, 4] \rightarrow q_{1}, f(q_{1})
\rightarrow q_{2} \rbrace$ and final state $q_2$.  We have: $f([1, 4])
\rightarrow^{*} q_{2}$ and $f([0,2]) \rightarrow^{*} q_{2}$, and the
recognized langage of $\A$ is given by $\Lang(\A, q_{2}) = \{
f([0,0]), f([1,1]), \ldots, f([4,4]) \}$.
\end{example}

\subsection{Operations on $\LTA$}

Most of the algorithms for Boolean operations on $\LTA$ are
straightforward adaptations of those defined on $TA$ (see
\cite{tata2007}).

$\LTA$ are closed by union and intersection, and we shortly explain
how these two operations $\cup$ and $\cap$ can be performed on two
$\LTA$s $\A= \langle \F, \Q, \Q_f,\Delta \rangle$ and $\A'= \langle
\F, \Q', \Q_f',\Delta' \rangle$~:
\begin{itemize}
\item $\A \cup \A' = \langle \F, \Q \cup \Q', \Q_f \cup \Q_f',\Delta
  \cup \Delta'\rangle$ assuming that the sets $\Q$ and $\Q'$ are
  disjoint.
\item $\A \cap \A'$ is recognized by the $\LTA$ $\A \cap \A'= \langle \F,
  \Q \times Q', \Q_f \times \Q_f',\Delta_\cap \rangle$ where the
  transitions of $\Delta_\cap$ are defined by the rules:
\[
\begin{array}{c}
\lambda \to q \in \Delta \quad \lambda' \to q' \in \Delta' \quad \lambda \sqcap \lambda' \neq \bot \\
\hline
 \lambda \sqcap \lambda' \to (q,q')
\end{array}
\]
\noindent and
\[
\begin{array}{c}
f(q_1,\dots,q_n) \to q \in \Delta \quad f(q_1',\dots,q_n') \to q' \in \Delta' \\
\hline
 f((q_1,q'_1),\dots,(q_n,q'_n)) \to (q,q')
\end{array} 
\]
\end{itemize}

Assuming that the $\LTA$ is deterministic, the complement automaton is
obtained by complementing the set of final states.  To decide if the
language described by an $\LTA$ is empty or not, it suffices to observe
that an $\LTA$ accepts at least one tree if and only if there is an
reachable final state.  A reduced automaton is an automaton without
inaccessible state. The language recognized by a reduced automaton is
empty if and only if the set of final states is empty.  
As a first step we thus have to reduce the $\LTA$, that is to remove the set of unreachable states.

Let us recall the reduction algorithm:
\begin{tabbing}
\textbf{Reduction Algorithm}\\
 \textbf{input:} $\LTA$ $\A = \langle \F, \Q, \Q_{f},\Delta \rangle$\\
 \textbf{begin} \= \\
 \> \textit{Marked}:=$\emptyset$ \\
 \> /* Marked is the set of accessible states */\\
 \> repeat \= \\
 \> \> if \= $\exists a \in \F^{0} = \FNI^{0} \cup \FI^{\#^{0}}$ such that $a \rightarrow q \in \Delta$\\
 \> \> \> or \= $\exists f \in \F^{n} = \FNI^{n} \cup \FI^{\#^{n}}$ such that $f(q_{1}, \ldots, q_{n}) \rightarrow q \in \Delta$ \\
 \> \> \> \> where $q_{1}, \ldots, q_{n} \in Marked$ \\
\> \> then $Marked := Marked \cup \{q\}$ \\
\> until no state can be added to \textit{Marked} \\
\> $\Q_{r} := Marked$ \\
\> $\Q_{r_{f}} := \Q_{f} \cap Marked$ \\
\> $\Delta_{r} := \{f(q_{1}, \ldots, q_{n}) \rightarrow q \in \Delta | q, q_{1}, \ldots, q_{n} \in Marked\}$ \\
\> \textbf{output:} Reduced $\LTA$ $\A_{r} = \langle \F, \Q_{r}, \Q_{r_{f}},\Delta_{r} \rangle$ \\
\textbf{end}
\end{tabbing}

Then, let $\A$ be an $\LTA$ and $\A_{r} = \langle \F, \Q_{r}, \Q_{r_{f}},\Delta_{r} \rangle$ the corresponding reduced $\LTA$, $\Lang(\A)$ is empty iff $\Q_{r_{f}} = \emptyset$.

Let $\A_{1}$, $\A_{2}$ be two $\LTA$. We have $\Lang(\A_{1})\subseteq\Lang(\A_{2}) \Leftrightarrow \Lang(\A_{1}\cap \overline{\A_{2}}) = \emptyset$.\\

Complementation and inclusion requires an input deterministic
$\LTA$. However, by adapting the proof of finite-word lattice automata
given in \cite{DBLP:conf/sas/GallJ07}, one can show that $\LTA$ are
not closed under determinization. In the next section, we propose an algorithm
that computes an over-approximation deterministic automaton for any
given $LTA$. This algorithm, which extends the one of
\cite{DBLP:conf/sas/GallJ07}, relies on a partition function that can
be refined to make the overapproximation more precise.

\subsection{Determinization}

As we shall now see, an $\LTA$ $\A = \langle \F, \Q, \Q_f,\Delta \rangle$ is \textit{deterministic} if there is no transition $f(q_{1}, ..., q_{n}) \rightarrow q$, $f(q_{1}, ..., q_{n}) \rightarrow q'$ in $\Delta$ 
such that $q \neq q'$, where $f \in \mathcal{F}_{n}$, and no transition $\lambda_{1} \rightarrow q$, $\lambda_{2} \rightarrow q'$ such that $q \neq q'$ and $\lambda_{1} \sqcap \lambda_{2} \neq \bot$, where $\lambda_{1}, \lambda_{2} \in \Lambda$.
As an example, if  $\Delta = \lbrace [1,3] \rightarrow q_{1}, [2,5] \rightarrow q_{2}\rbrace$, then we have that $\A$ is not deterministic.

Determinizing an $\LTA$ requires complementation on elements on lattice. Indeed, consider the $\LTA$ $\A$ having the following transitions $[-3,2] \rightarrow q_{1}$ and $[1,6] \rightarrow q_{2}$. The deterministic $\LTA$ corresponding to $\A$ should have the following transitions:  $[-3,1[ \rightarrow q_{1}$, $[1,2] \rightarrow \{q_{1}, q_{2} \}$ and $]2,6] \rightarrow q_{2}$. To produce those transitions, we have to compute $[-3,2]\sqcap [1,6] = [1,2]$, and then $[-3,2]\setminus [1,2]$ and $[1,6] \setminus [1,2]$.
Unfortunately, there are lattices that are not closed under complementation. 
 As a consequence, determinization of an $\LTA$ does not preserve the recognized language.

 The solution proposed in \cite{DBLP:conf/sas/GallJ07} for word automata
 is to use a finite partition of the lattice $\Lambda$, which commands when two transitions
should be merged using the $lub$ operator. The fusion of transitions may induce an over-approximation controlled by the fineness of the partition.\\

\textbf{Partitioned \textit{LTA}.}
$\Pi$ is a \textit{partition} of an atomic lattice $\Lambda$ if $\Pi \subseteq 2^\Lambda$ and $\forall \pi_{1}, \pi_{2} \in \Pi$, $\pi_{1} \sqcap \pi_{2} = \bot$, 
and $\forall a \in \At(\Lambda), \exists \pi \in \Pi: a \sqsubseteq \pi$.
As an example, if $\Lambda$ is the lattice of intervals, we can have a partition $\Pi = \lbrace ]-\infty, 0[, [0,0], ]0,+\infty[\rbrace$.

\begin{definition}[Partitioned lattice tree automaton ($P\LTA$)]
A $P\LTA$ $\mathcal{A}$ is an $\LTA$ $\mathcal{A} = \langle \Pi, \mathcal{Q}, \mathcal{F}, \mathcal{Q}_{f}, \Delta \rangle$ equipped with a partition $\Pi$,
such that for all lambda transitions $\lambda \rightarrow q \in \Delta$, $\exists \pi \in \Pi$ such that $\lambda \sqsubseteq \pi$.

A $P\LTA$ is \textit{merged} if $\lambda_{1} \rightarrow q$, $\lambda_{2} \rightarrow q \in \Delta \wedge \lambda_{1} \sqsubseteq \pi_{1} \wedge \lambda_{2} \sqsubseteq \pi_{2}$
$\implies$ $\pi_{1} \sqcap \pi_{2} = \emptyset$, where $\lambda_{1}, \lambda_{2} \in \Lambda$ and $\pi_{1}, \pi_{2} \in \Pi$.
\end{definition}
For example, if $\Pi = \lbrace ]-\infty, 0[, [0,0], ]0,+\infty[\rbrace$, a $P\LTA$ can have the following transition rules : 
$[-3,-1] \rightarrow q_{1}$, 
$[-5,-2] \rightarrow q_{2}$, 
$[3,4] \rightarrow q_{4}$.
This $P\LTA$ is not merged because of the two lambda transitions  $[-3,-1] \rightarrow q_{1}$ and $[-5,-2] \rightarrow q_{2}$, because $[-3,-1]$ and $[-5,-2]$ are in the same partition. The merged corresponding one will have the following transition : $[-5,-1] \rightarrow q_{1,2}$, instead of the two transitions mentionned before.

Any $\LTA$ $\A$ can be turned into a	$P\LTA$ $\A_{p}$ the following way~: 
Let $\Pi$ be the partition. For any lambda transition $\lambda \rightarrow q \in \A$, if $\exists \pi_{1}, \ldots, \pi_{n} \in \Pi$ such that $\lambda \sqcap \pi_{1} \neq \emptyset, \ldots, \lambda \sqcap \pi_{n} \neq \emptyset$, where $\pi_{1} \neq \ldots \neq \pi_{n}$, the transition $\lambda \rightarrow q$ will be replaced by $n$ transitions $\lambda\sqcap \pi_{1} \rightarrow q, \ldots, \lambda\sqcap\pi_{n} \rightarrow q$ in $\A_{p}$.

\begin{example}
\label{exRaf}
Let $\mathcal{A} = \langle \mathcal{Q}, \mathcal{F}, \mathcal{Q}_{f}, \Delta \rangle$ be an $\LTA$ such that $\Delta = \{  [3,4] \rightarrow q_{1}, [-3, 2] \rightarrow q_{2}, f(q_{1},q_{2}) \rightarrow q_{f} \}$, and $\Pi = \{ ]-\infty,0[, [0,0], ]0,+\infty[ \}$ be a partition. Then the corresponding $P\LTA$ is $\mathcal{A}_{p} = \langle \mathcal{Q}, \mathcal{F}, \mathcal{Q}_{f}, \Delta_{p} \rangle$, where $\Delta_{p}= \{  [3,4] \rightarrow q_{1}, [-3, 0[ \rightarrow q_{2}, [0,0] \rightarrow q_{2}, ]0,2] \rightarrow q_{2}, f(q_{1},q_{2}) \rightarrow q_{f} \}$.
\end{example}

Two lambda transitions $\lambda_{1} \rightarrow q$, $\lambda_{2} \rightarrow q$ of a $P\LTA$ can not be merged if $\lambda_{1}$ and $\lambda_{2}$ belong to different elements of the partition, whereas they might be merged in the opposite case.

\begin{proposition}[Equivalence between $\LTA$ and $P\LTA$]
Given an $\LTA$ $\mathcal{A} = \langle \mathcal{Q}, \mathcal{F}, \mathcal{Q}_{f}, \Delta \rangle$ and a partition $\Pi$, there exists a $P\LTA$ $\mathcal{A}' = \langle \Pi, \mathcal{Q}, \mathcal{F}, \mathcal{Q}_{f}, \Delta ' \rangle$ recognizing the same language.
\end{proposition}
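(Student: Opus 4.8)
The plan is to take $\A'$ to be the $P\LTA$ $\A_p$ obtained from $\A$ by the transformation described just before the statement: every ground transition of $\A$ is kept as is, and every lambda transition $\lambda\rw q$ of $\lrules$ is replaced by the transitions $\lambda\sqcap\pi\rw q$ for those $\pi\in\Pi$ with $\lambda\sqcap\pi\neq\bot$; the components $\Q$, $\F$, $\Q_f$ are unchanged. That $\A'$ is a well-formed $P\LTA$ for $\Pi$ is immediate, since each new lambda transition $\lambda\sqcap\pi\rw q$ has left-hand side in $\Lambda\setminus\{\bot\}$ with $\lambda\sqcap\pi\sqsubseteq\pi\in\Pi$, and the ground transitions carry no side condition (note that $\A'$ need not be \emph{merged}, which the statement does not require). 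It thus remains to prove $\Lang(\A')=\Lang(\A)$.

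The crux is a small combinatorial fact about partitions: for every $\lambda\in\Lambda\setminus\{\bot\}$ and every atom $a\in\At(\Lambda)$, we have $a\sqsubseteq\lambda$ if and only if there is $\pi\in\Pi$ with $\lambda\sqcap\pi\neq\bot$ and $a\sqsubseteq\lambda\sqcap\pi$. The ``if'' direction is trivial because $\lambda\sqcap\pi\sqsubseteq\lambda$. For ``only if'', the covering clause in the definition of a partition gives a $\pi\in\Pi$ with $a\sqsubseteq\pi$; since also $a\sqsubseteq\lambda$ and $\lambda\sqcap\pi$ is the greatest lower bound of $\lambda$ and $\pi$, we get $a\sqsubseteq\lambda\sqcap\pi$, and $a\neq\bot$ forces $\lambda\sqcap\pi\neq\bot$. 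In words, slicing $\lambda$ along the cells of $\Pi$ neither adds nor removes any atom below $\lambda$; since $\Lambda$ is atomic and the recognized language (Definition~\ref{def:rec}) is defined on terms whose lattice leaves are atoms, this is exactly the invariant needed at the leaves.

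Granting this, I would show $\Lang(\A,q)=\Lang(\A',q)$ for every $q\in\Q$ by induction on the length of a run to $q$ (equivalently, on the structure of the recognized term). Every step that uses a ground transition is literally a step of the other automaton, since $\grules$ is copied verbatim; every step that uses a lambda transition on an interpreted leaf $a$ is available in $\A$ iff it is available in $\A'$ by the fact above; and on passive constants the two automata agree for the same reason. Taking the union over the unchanged set $\Q_f$ then gives $\Lang(\A)=\Lang(\A')$.

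The step that actually carries content is the lambda-transition case, i.e.\ establishing that replacing $\lambda\rw q$ by the family $\{\lambda\sqcap\pi\rw q\}$ is language-neutral. The inclusion $\Lang(\A')\subseteq\Lang(\A)$ is the easy half, because each $\lambda\sqcap\pi\sqsubseteq\lambda$; the reverse inclusion is where the covering property of $\Pi$ and the atomicity of $\Lambda$ genuinely enter --- an atom below $\lambda$ must lie inside one of the partition cells that $\lambda$ meets, hence below the corresponding $\lambda\sqcap\pi$. Everything else (well-formedness, the passive cases of the induction, assembling the final equality) is routine.
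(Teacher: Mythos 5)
Your construction is exactly the paper's: replace each lambda transition $\lambda\rw q$ by the transitions $\lambda\sqcap\pi_i\rw q$ and keep everything else, the paper's one-line justification being the claim $\bigsqcup_i(\lambda\sqcap\pi_i)=\lambda$, which in an atomic lattice is equivalent to your atom-preservation lemma. So the proposal is correct and takes essentially the same approach, merely spelling out the atom argument and the run induction that the paper leaves implicit.
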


\begin{proof}
$\mathcal{A}'$ is obtained from $\mathcal{A}$ by replacing each lambda transition $\lambda \rightarrow q \in \Delta$ by at most $n_{\Pi}$ transitions 
$\lambda_{i} \rightarrow q$ where $\lambda_{i} = \lambda \sqcap \pi_{i}$, $\pi_{i} \in \Pi$, such that $\bigsqcup \lambda_{i} = \lambda$.
\end{proof}

Any $P\LTA$ $\mathcal{A} = \langle \Pi, \mathcal{Q}, \mathcal{F}, \mathcal{Q}_{f}, \Delta \rangle$  can be transformed into a
merged $P\LTA$ $\mathcal{A}_{m} = \langle \Pi, \mathcal{Q}, \mathcal{F}, \mathcal{Q}_{f}, \Delta\textit{m} \rangle$ such that $\mathcal{L}(\mathcal{A}) \subseteq \mathcal{L}(\mathcal{A}_{m})$ by merging transitions as follows :
 $\dfrac{q \in \mathcal{Q} \ \ \ \pi \in \Pi \ \ \ \lambda_{m} = \bigsqcup \lbrace \lambda \sqcap \pi, \lambda \in \Lambda \vert \lambda \rightarrow q \in \Delta \rbrace}
{\lambda_{m} \rightarrow q \in \Delta_{m}}$

\begin{example}

If $\mathcal{A} = \langle \Pi, \mathcal{Q}, \mathcal{F}, \mathcal{Q}_{f}, \Delta \rangle$, where $\Pi = ]-\infty,0[,[0,+\infty$ and $\Delta = \{[0,2] \rightarrow q_{1}, [5,8] \rightarrow q_{2}, [-3,-2] \rightarrow q_{3}, [-4,-1] \rightarrow q_{4}, h(q_{1},q_{2},q_{3},q_{4}) \rightarrow q_{f} \}$, the merged automaton $\mathcal{A}_{m} = \langle \Pi, \mathcal{Q}, \mathcal{F}, \mathcal{Q}_{f}, \Delta\textit{m} \rangle$ corresponding to $\A$ has the following transitions: $\Delta_{m} = \{[0,8] \rightarrow q_{1,2}, [-4,-1] \rightarrow q_{3,4}, h(q_{1,2},q_{1,2},q_{3,4},q_{3,4}) \rightarrow q_{f} \}$.

\end{example}

We are now ready to sketch the determinization algorithm.
The determinization of a $P\LTA$, which transforms a $P\LTA$ $\mathcal{A}$ to a merged Deterministic Partitioned $\LTA$ $\mathcal{A}_{d}$ according to a partition $\Pi$,  mimics the one on usual $TA$. The difference is that two $\lambda$-transitions $\lambda_{1} \rightarrow q_{1}$ and $\lambda_{2} \rightarrow q_{2}$ are merged in $\lambda_{1} \sqcap \lambda_{2} \rightarrow \{ q_{1}, q_{2}\}$ when $\lambda_{1}$ and $\lambda_{2}$ are included in the same element $\pi$ of the partition $\Pi$. Consequently, the resulting automaton recognizes a larger language~: $\mathcal{L}(\mathcal{A}) \subseteq \mathcal{L}(\mathcal{A}_{d})$.
This algorithm produces the best approximation in term of inclusion of languages.

\begin{tabbing}
\textbf{Determinization Algorithm :} \\
\textbf{input:} $P\LTA$ $\mathcal{A}$ = $\langle \Pi, \mathcal{Q}, \mathcal{F}, \mathcal{Q}_{f}, \Delta \rangle$\\
\textbf{begin}  \= \\
\> $\mathcal{Q}_{d}$ := $\emptyset$; $\Delta_{d}$ = $\emptyset$;\\
\> \textbf{for all} \= $\pi \in \Pi$ \textbf{do} \\
\> \> $Trans(\pi)$ := $\lbrace \lambda \rightarrow q \in \Delta \vert \lambda \in \Lambda, \lambda \sqsubseteq \pi \rbrace$;\\
\> \> $s := \lbrace q \in \mathcal{Q} \vert \lambda \rightarrow q \in Trans(\pi)\rbrace$;\\
\> \> $\mathcal{Q}_{d}$ := $\mathcal{Q}_{d} \cup \lbrace s \rbrace$; \\
\> \> $\lambda_{m} := \bigsqcup \lbrace \lambda \vert \lambda \rightarrow q \in Trans(\pi)\rbrace$;\\
\> \> $\Delta_{d}$ := $\Delta_{d} \cup \lbrace \lambda_{m} \rightarrow s \rbrace$;\\
\> \textbf{end for}\\

\medskip

 \> \textbf{repeat}\\

\> \> Let $f \in \mathcal{F}_{n}$, $s_{1}, \ldots, s_{n} \in \mathcal{Q}_{d}$,\\ 
\> \> $s := \lbrace q \in \mathcal{Q} \vert \exists q_{1} \in s_{1}, \ldots, q_{n} \in s_{n}, f(q_{1}, \ldots, q_{n}) \rightarrow q \in \Delta \rbrace$;\\
\> \> $\mathcal{Q}_{d}$ := $\mathcal{Q}_{d} \cup \lbrace s \rbrace$; \\
\> \> $\Delta_{d}$ := $\Delta_{d} \cup \lbrace f(s_{1}, \ldots, s_{n}) \rightarrow s \rbrace$;\\


 \> \textbf{until} no more rule can be added to $\Delta_{d}$\\
 $\mathcal{Q}_{df}$ := $\lbrace s \in \mathcal{Q}_{d} \vert s \cap \mathcal{Q}_{f} \neq \emptyset\rbrace$\\
\textbf{output} merged $DP\LTA$ $\mathcal{A}_{d}$ = $\langle \Pi, \mathcal{Q}_{d}, \mathcal{F}, \mathcal{Q}_{df}, \Delta_{d} \rangle$\\
\textbf{end} \\
\end{tabbing}

\begin{example}
\label{ex:part}

Let $\Delta = \lbrace[-3,-1] \rightarrow q_{1}, [-5,-2] \rightarrow q_{2}, [3,4] \rightarrow q_{3}, [-3, 2] \rightarrow q_{4},
 f(q_{1},q_{2}) \rightarrow q_{5}, f(q_{3},q_{4}) \rightarrow q_{6}, f(q_{5},q_{6}) \rightarrow q_{f1}, f(q_{5},q_{6}) \rightarrow q_{f2}\rbrace$,
and $\Pi = \lbrace ]-\infty, 0[, [0,0], ]0,+\infty[\rbrace$

With the determinization algorithm defined above, we obtain this set of transition for the deterministic corresponding $P\LTA$ :
$\Delta_{d} = \lbrace, 
[-5,0[ \rightarrow q_{1,2,4}, ]0,4] \rightarrow q_{3,4}, [0, 0] \rightarrow q_{4}, f(q_{1,2,4},q_{1,2,4}) \rightarrow q_{5},
 f(q_{3,4},q_{3,4}) \rightarrow q_{6}, f(q_{3,4},q_{4}) \rightarrow q_{6}, f(q_{3,4},q_{1,2,4}) \rightarrow q_{6}, f(q_{5},q_{6}) \rightarrow q_{f1,f2}\rbrace$.

\end{example}

\begin{proposition}{Deterministic $P\LTA$ is the best upper-approximation}
\label{DetBest}

Let $\mathcal{A}_{1}$ be a $P\LTA$ and $\mathcal{A}_{2}$ the $P\LTA$ obtained with the determinization algorithm. 
Then $\mathcal{A}_{2}$ is a best upper-approximation of $\mathcal{A}_{1}$ as a merged and deterministic $P\LTA$.

\begin{enumerate}
\item $\mathcal{L}(\mathcal{A}_{1}) \subseteq \mathcal{L}(\mathcal{A}_{2})$
\item For any merged and deteministic $P\LTA$ $\mathcal{A}_{3}$ based on the same partition as $\mathcal{A}_{1}$, 
$\mathcal{L}(\mathcal{A}_{1}) \subseteq \mathcal{L}(\mathcal{A}_{3}) \implies \mathcal{L}(\mathcal{A}_{2}) \subseteq \mathcal{L}(\mathcal{A}_{3})$
\end{enumerate}
 
\end{proposition}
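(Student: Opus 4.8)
The plan is to prove the two items separately. The first is soundness of the determinization, and follows the classical subset-construction argument adapted to the lattice leaves; the second is the optimality claim, and reduces to showing that no merged deterministic $P\LTA$ over $\Pi$ can be finer than $\mathcal{A}_{2}$, neither at the ground nodes (the standard argument) nor at the $\lambda$-leaves, where the partition does the work. Throughout, for $i\in\{2,3\}$ write $\delta_{i}(t)$ for the unique state of $\mathcal{A}_{i}$ reachable by a ground term $t$ when one exists (both automata being deterministic), and for $\pi\in\Pi$ let $s^{\pi}$ and $\lambda_{m}^{\pi}:=\bigsqcup\{\lambda\mid\lambda\to q\in Trans(\pi)\}$ be the subset-state and the merged lattice element produced by the for-loop of the determinization algorithm.

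For the first item I would prove, by structural induction on $t\in\TFAL$, that $t\rw^{*}_{\mathcal{A}_{1}}q$ implies that $\delta_{2}(t)$ is defined and contains $q$. For a leaf atom $a$ reduced in $\mathcal{A}_{1}$ through $\lambda\to q$ with $a\sqsubseteq\lambda$: since $\mathcal{A}_{1}$ is a $P\LTA$ there is $\pi\in\Pi$ with $\lambda\sqsubseteq\pi$, so $\lambda\to q\in Trans(\pi)$, whence $q\in s^{\pi}$ and $a\sqsubseteq\lambda\sqsubseteq\lambda_{m}^{\pi}$; thus $a\rw_{\mathcal{A}_{2}}s^{\pi}$ and $\delta_{2}(a)=s^{\pi}\ni q$. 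Passive constants and the step for $f(t_{1},\dots,t_{n})$ are exactly the subset construction, the ``repeat'' loop supplying the transition $f(\delta_{2}(t_{1}),\dots,\delta_{2}(t_{n}))\to\delta_{2}(t)$ with $q\in\delta_{2}(t)$. If $t\in\mathcal{L}(\mathcal{A}_{1})$ then by Definition~\ref{def:rec} there are $t'\sqsupseteq t$ and $q_{f}\in\Q_{f}$ with $t'\rw^{*}_{\mathcal{A}_{1}}q_{f}$; the claim gives $q_{f}\in\delta_{2}(t')\cap\Q_{f}$, so $\delta_{2}(t')\in\Q_{df}$, and since $t\sqsubseteq t'$ we conclude $t\in\mathcal{L}(\mathcal{A}_{2})$.

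For the second item, fix a merged deterministic $P\LTA$ $\mathcal{A}_{3}$ over $\Pi$ with $\mathcal{L}(\mathcal{A}_{1})\subseteq\mathcal{L}(\mathcal{A}_{3})$ and let $t\in\mathcal{L}(\mathcal{A}_{2})$; as above we may assume $t\in\TFAL$ and $\delta_{2}(t)\in\Q_{df}$, so some $q_{f}\in\Q_{f}$ lies in $\delta_{2}(t)$. Unfolding the inductive definition of $\delta_{2}$ produces a witnessing ``$\mathcal{A}_{1}$-skeleton'': ground transitions of $\Delta_{1}$ at the internal nodes and, at each leaf $p$ bearing atom $a_{p}$, a transition $\lambda_{p}\to q_{p}\in Trans(\pi_{p})$ with $\pi_{p}$ the block satisfying $a_{p}\sqsubseteq\lambda_{m}^{\pi_{p}}$, these composing to $q_{f}$. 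Choosing an atom $a'_{p}\sqsubseteq\lambda_{p}$ (possible since $\lambda_{p}\neq\bot$ and $\Lambda$ is atomic) and substituting $a'_{p}$ for $a_{p}$ yields $t^{\star}\in\TFAL$ with $t^{\star}\rw^{*}_{\mathcal{A}_{1}}q_{f}$, hence $t^{\star}\in\mathcal{L}(\mathcal{A}_{1})\subseteq\mathcal{L}(\mathcal{A}_{3})$. It remains to show that $\mathcal{A}_{3}$ cannot separate $t$ from $t^{\star}$: since $\mathcal{A}_{3}$ is deterministic and the two terms share the same internal structure, it suffices to prove $\delta_{3}(a_{p})=\delta_{3}(a'_{p})$ at every leaf, which then propagates up to $\delta_{3}(t)=\delta_{3}(t^{\star})\in\Q_{3f}$, giving $t\in\mathcal{L}(\mathcal{A}_{3})$. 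For the leaf equality one uses that $a_{p}$ and $a'_{p}$ both lie below $\lambda_{m}^{\pi_{p}}$, hence in the single block $\pi_{p}$, together with the ``merged'' and ``over~$\Pi$'' hypotheses on $\mathcal{A}_{3}$, which force the $\lambda$-transition of $\mathcal{A}_{3}$ active on $\pi_{p}$ to be unique and to dominate $\lambda_{m}^{\pi_{p}}$, so it fires on both $a_{p}$ and $a'_{p}$ with the same target.

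The main obstacle is exactly this last leaf-level step, and it is where the precise strength of the notion ``merged $P\LTA$ over $\Pi$'' matters. One must argue that a merged deterministic $P\LTA$ over-approximating $\mathcal{A}_{1}$ is forced, inside each block $\pi$, to carry a single $\lambda$-transition whose lattice element is at least $\lambda_{m}^{\pi}$, so that it collapses all atoms of $\pi$ into one state just as $\mathcal{A}_{2}$ does; this is what rules out a strictly better over-approximation. I expect this to need a reading of ``merged'' strong enough that each block carries a single $\lambda$-transition, and also the hypothesis that $\mathcal{A}_{1}$ is reduced, so that each $\lambda_{p}\to q_{p}$ is co-reachable and therefore every atom below $\lambda_{p}$, plugged into the corresponding accepting context, genuinely belongs to $\mathcal{L}(\mathcal{A}_{1})$ and must be accepted by $\mathcal{A}_{3}$. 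Everything above the leaves is the textbook minimality argument for the subset construction: a deterministic automaton over-approximating $\mathcal{A}_{1}$ reaches, on each term, a state that simulates the set of $\mathcal{A}_{1}$-reachable states and must accept as soon as one of them is final. I would also dispatch the degenerate cases (blocks not fully covered by $\mathcal{A}_{1}$'s lambda transitions, terms on which $\delta_{2}$ or $\delta_{3}$ is undefined, and the $\sqsubseteq$-clause of Definition~\ref{def:rec}), which are routine.
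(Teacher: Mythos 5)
Your item~(1) is essentially the paper's own argument: the paper phrases it as an induction on the transitions of $\mathcal{A}_{1}$, you phrase it as a structural induction on terms, but both are the standard soundness of the subset construction adapted to the $\lambda$-leaves. For item~(2) you take a genuinely different route. The paper builds two simulation relations $\mathcal{R}_{1}\subseteq\mathcal{Q}_{1}\times\mathcal{Q}_{2}$ and $\mathcal{R}_{2}\subseteq\mathcal{Q}_{1}\times\mathcal{Q}_{3}$ and composes them into a relation witnessing that $\mathcal{A}_{3}$ simulates $\mathcal{A}_{2}$, whereas you argue directly at the language level: extract an $\mathcal{A}_{1}$-skeleton from an accepting $\mathcal{A}_{2}$-run, replace each leaf atom $a_{p}$ by an atom $a'_{p}\sqsubseteq\lambda_{p}$ to obtain a witness $t^{\star}\in\Lang(\mathcal{A}_{1})\subseteq\Lang(\mathcal{A}_{3})$, and show that $\mathcal{A}_{3}$ cannot separate $t$ from $t^{\star}$. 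Your route avoids the simulation machinery and makes the role of the partition explicit; the paper's route avoids your term-surgery and the bookkeeping around the $\sqsubseteq$-clause of the recognized language. Both proofs, however, hinge on exactly the same fact: the unique $\lambda$-transition of $\mathcal{A}_{3}$ inside a block $\pi$ must carry an element dominating $\lambda_{m}^{\pi}$. In the paper this is the sentence claiming that $\lambda_{3}$ must contain all the $\lambda$ of $\Delta_{1}$ below $\pi$ ``or else $\mathcal{A}_{3}$ is not deterministic'' (determinism is not the real reason); in your write-up it is the deferred leaf-level lemma.

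Two points of precision on that lemma. First, the hypothesis you need is not the paper's notion of reduced (absence of \emph{inaccessible} states) but co-accessibility of the targets of the $\lambda$-transitions of $\mathcal{A}_{1}$, i.e.\ a trim automaton: if some $\lambda\to q\in\Delta_{1}$ has a dead target, e.g.\ $[1,1]\to q_{1}$, $[3,3]\to q_{2}$, $f(q_{1})\to q_{f}$ with $q_{f}$ final and $q_{2}$ without accepting context, then the automaton with transitions $[1,1]\to p_{1}$, $f(p_{1})\to p_{f}$ is merged, deterministic, over the same partition and contains $\Lang(\mathcal{A}_{1})$, yet $f([2,2])\in\Lang(\mathcal{A}_{2})\setminus\Lang(\mathcal{A}_{3})$; so optimality genuinely requires this cleaning, which the paper's proof also assumes silently. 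Second, inside your skeleton argument the co-reachability of the particular leaf transitions you extract is automatic (the skeleton ends in a final state), but that only yields $a'_{p}\sqsubseteq\mu$; since an atom below a least upper bound need not be below any of the joined elements, concluding $a_{p}\sqsubseteq\mu$ really does require the full domination $\mu\sqsupseteq\lambda_{m}^{\pi_{p}}$, proved as you sketch: every atom below every $\lambda$ of $Trans(\pi_{p})$ occurs in $\Lang(\mathcal{A}_{1})$, can only be read by the single block-$\pi_{p}$ transition of $\mathcal{A}_{3}$ (blocks are disjoint), and atomicity lifts atom-coverage to $\mu\sqsupseteq\lambda_{m}^{\pi_{p}}$. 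With that lemma spelled out (and the strong reading of merged, which the paper's example confirms), your proof goes through.
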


\begin{proof}[Proposition \ref{DetBest}]

$(1)$ Base case : for all lambda transitions of $\mathcal{A}_{1}$ $\lambda \rightarrow q$, let $\pi \in \Pi$ such that $\lambda \sqsubseteq \pi$. Then $Trans(\pi)$ = $\lbrace \lambda \rightarrow q \in \Delta \vert \lambda \in \Lambda, \lambda \sqsubseteq \pi \rbrace$. Then there is a transition $\lambda ' \rightarrow Q$ in $\mathcal{A}_{2}$ such that $\lambda ' = \bigsqcup \lbrace \lambda \vert \lambda \rightarrow q \in Trans(\pi)\rbrace$ and $Q = \lbrace q \vert \lambda \rightarrow q \in Trans(\pi)\rbrace$, so $q \in Q$.\\
induction case : for all non lambda transition of $\mathcal{A}_{1}$ $f(q_{1}, \ldots, q_{n}) \rightarrow q$, there is the corresponding transition $f(Q_{1}, \ldots, Q_{n}) \rightarrow Q$ such that $q \in Q$. We have $q_{1} \in Q_{1}, \ldots, q_{n} \in Q_{n}$ thanks to the induction hypothesis.\\
So $\mathcal{L}(\mathcal{A}_{1}) \subseteq \mathcal{L}(\mathcal{A}_{2})._{\square}$\\

$(2)$ $\mathcal{A}_{1}$ = $\langle \Pi, \mathcal{Q}_{1}, \mathcal{F}, \mathcal{Q}_{f_{1}}, \Delta_{1} \rangle$, $\mathcal{A}_{2}$ = $\langle \Pi, \mathcal{Q}_{2}, \mathcal{F}, \mathcal{Q}_{f_{2}}, \Delta_{2} \rangle$ and $\mathcal{A}_{3}$ = $\langle \Pi, \mathcal{Q}_{3}, \mathcal{F}, \mathcal{Q}_{f_{3}}, \Delta_{3} \rangle$\\

As $\mathcal{L}(\mathcal{A}_{1}) \subseteq \mathcal{L}(\mathcal{A}_{2})$ (1) and $\mathcal{L}(\mathcal{A}_{1}) \subseteq \mathcal{L}(\mathcal{A}_{3})$, let $\mathcal{R}_{1} : \mathcal{Q}_{1} \times \mathcal{Q}_{2}$ and $\mathcal{R}_{2} : \mathcal{Q}_{1} \times \mathcal{Q}_{3}$ be two simulation relations defining these properties as follows.

Let $q_{1} \in \mathcal{Q}_{1}$ and $q_{2} \in \mathcal{Q}_{2}$, $(q_{1}, q_{2}) \in \mathcal{R}_{1}$ iff
\begin{itemize}
\item $\lambda_{1} \rightarrow q_{1}  \in \Delta_{1}, \  \lambda_{2} \rightarrow q_{2}  \in  \Delta_{2}$ and $\lambda_{1} \sqsubseteq \lambda_{2}$, where $\lambda_{1} , \lambda_{2} \in \Lambda$,\\
or\\
$f(q_{i_{1}}, \ldots, q_{i_{n}}) \rightarrow q_{1}  \in  \Delta_{1}, \ f(q_{i_{1}}', \ldots, q_{i_{n}}') \rightarrow q_{2}  \in \Delta_{2}$  and 
$\forall j \in [1,  n], \ (q_{i_{j}}, q_{i_{j}}') \in \mathcal{R}_{1}$, where $f \in \mathcal{F}_{n}$
\item $q_{1} \in \mathcal{Q}_{f_{1}} \Longleftrightarrow q_{2} \in \mathcal{Q}_{f_{2}}$ \\
\end{itemize}

Let $q_{1} \in \mathcal{Q}_{1}$ and $q_{3} \in \mathcal{Q}_{3}$, $(q_{1}, q_{3}) \in \mathcal{R}_{2}$ iff
\begin{itemize}
\item $\lambda_{1} \rightarrow q_{1}  \in \Delta_{1}, \  \lambda_{3} \rightarrow q_{3}  \in  \Delta_{3}$ and $\lambda_{1} \sqsubseteq \lambda_{3}$, where $\lambda_{1} , \lambda_{3} \in \Lambda$,\\
or\\
$f(q_{i_{1}}, \ldots, q_{i_{n}}) \rightarrow q_{1}  \in  \Delta_{1}, \ f(q_{i_{1}}', \ldots, q_{i_{n}}') \rightarrow q_{3}  \in \Delta_{2}$  and 
$\forall j \in [1,  n], \ (q_{i_{j}}, q_{i_{j}}') \in \mathcal{R}_{2}$, where $f \in \mathcal{F}_{n}$
\item $q_{1} \in \mathcal{Q}_{f_{1}} \Longleftrightarrow q_{3} \in \mathcal{Q}_{f_{3}}$ \\
\end{itemize}

Let $\mathcal{R} : \mathcal{Q}_{2} \times \mathcal{Q}_{3}$ be a simulation relation such that $(q_{2}, q_{3}) \in \mathcal{R}$ iff $\exists q_{1}
\in \mathcal{Q}_{1}.(q_{1}, q_{2}) \in \mathcal{R}_{1} \wedge (q_{1}, q_{3}) \in \mathcal{R}_{2}$, where $q_{2} \in \mathcal{Q}_{2}$, $q_{3} \in \mathcal{Q}_{3}$.\\


Let $(q_{2}, q_{3}) \in \mathcal{R}$. This means that : 
\begin{itemize}
\item $\lambda_{1} \rightarrow q_{1}  \in \Delta_{1}, \  \lambda_{2} \rightarrow q_{2}  \in  \Delta_{2}, \ \lambda_{3} \rightarrow q_{3}  \in  \Delta_{2}$ and $\lambda_{1} \sqsubseteq \lambda_{2}$ and $\lambda_{1} \sqsubseteq \lambda_{3}$, where $\lambda_{1} , \lambda_{2}, \lambda_{3} \in \Lambda$ (a)\\,
or\\
$f(q_{i_{1}}, \ldots, q_{i_{n}}) \rightarrow q_{1}  \in  \Delta_{1}, \ f(q_{i_{1}}', \ldots, q_{i_{n}}') \rightarrow q_{2}  \in \Delta_{2}, \  f(q_{i_{1}}'', \ldots, q_{i_{n}}'') \rightarrow q_{3}  \in \Delta_{3}$  and 
$\forall j \in [1,  n], \ (q_{i_{j}}, q_{i_{j}}') \in \mathcal{R}_{1}$ and $(q_{i_{j}}, q_{i_{j}}'') \in \mathcal{R}_{2}$, where $f \in \mathcal{F}_{n}$ (b)\\
\item $q_{1} \in \mathcal{Q}_{f_{1}} \Longleftrightarrow q_{2} \in \mathcal{Q}_{f_{2}}$ and $q_{1} \in \mathcal{Q}_{f_{1}} \Longleftrightarrow q_{3} \in \mathcal{Q}_{f_{3}}$ (c),
\end{itemize}
by definition of $\mathcal{R}_{1}$ and $\mathcal{R}_{2}$.\\

(a) Let $\pi \in \Pi$ be the element of the partition such that $\lambda_{1} \sqsubseteq \pi$. Then $Trans(\pi)$ = $\lbrace \lambda \rightarrow q \in \Delta \vert \lambda \in \Lambda, \lambda \sqsubseteq \pi \rbrace$, i.e the set of all the lambda transitions $\lambda \rightarrow q$ in $\Delta_{1}$ such that $\lambda \sqsubseteq \pi$. Of course $\lambda_{1} \sqsubseteq Trans(\pi)$, because $\lambda_{1} \sqsubseteq \pi$.
Then 
$\lambda_{2}$ is the least upper bound of all $\lambda \in \Lambda$ such that $\lambda \rightarrow q \in Trans(\pi)$, i.e
$\lambda_{2} = \bigsqcup \lbrace \lambda \vert \lambda \rightarrow q \in Trans(\pi)\rbrace$, according to the determinization algorithm.

As $\mathcal{A}_{3}$ is deterministic and contains $\mathcal{A}_{1}$, then $\lambda_{3}$ has to contain at least all the $\lambda \in \Lambda$ such that $\lambda \rightarrow q \in \Delta_{1}$ and $\lambda \sqsubseteq \pi$, or else $\mathcal{A}_{3}$ is not deterministic.

So $\lambda_{3} \sqsupseteq \bigsqcup \lbrace \lambda \vert \lambda \rightarrow q \in Trans(\pi)\rbrace$, so $\lambda_{2} \sqsubseteq \lambda_{3}$.\\

(b) We can immediately deduce that $\forall j \in [1,  n], \ (q_{i_{j}}', q_{i_{j}}'') \in \mathcal{R}$ by the definition of $\mathcal{R}$.\\

(c) So $q_{2} \in \mathcal{Q}_{f_{2}} \Longleftrightarrow q_{3} \in \mathcal{Q}_{f_{3}}$\\

And thanks to these properties deduced on $\mathcal{R}:\mathcal{Q}_{1}\times\mathcal{Q}_{2}$, we can deduce that $\mathcal{L}(\mathcal{A}_{2}) \subseteq \mathcal{L}(\mathcal{A}_{3})$.\\

As the least upper bound of two elements of a lattice is the best and unique upper-approximation, this determinization algorithm returns the best upper-approximation.$_{\square}$\\

\end{proof}

\subsection{Minimization}
To define the minimization algorithm, we first have to define a \textit{Refine} recursive algorithm which refines an equivalence relation $P$ on states, according to the $P\LTA$ $\A$.\\

\begin{tabbing}
\textbf{Refine($P,\A$)}  \\
 \textbf{begin}  \= \\
 \> Let $P'$ be a new equivalence relation;\\
 \> For \= all  $(q,q') \in \Q$ such that $qPq'$ do \\
 \> \> IF \= ($\forall f \in \F^{n}$, \\
 \> \> \> $\Delta(f(q_{1}, \ldots, q_{i-1}, q, q_{i+1}, \ldots, q_{n}))P\Delta(f(q_{1}, \ldots, q_{i-1}, q', q_{i+1}, \ldots, q_{n}))$,\\ 
 \> \> \> where $q_{1}, \ldots, q_{i-1}, q_{i+1}, \ldots, q_{n} \in \Q$) \\
 \> \> \> AND ($\forall a \in \FNI^{0}$, $a \rightarrow q \Rightarrow a \rightarrow q'$)\\
 \> \> \> AND \= ($\forall \lambda_{1}, \lambda_{2} \in \Lambda$, $\exists \pi \in \Pi$ \\
 \> \> \> \> such that $\lambda_{1} \rightarrow q \Rightarrow \lambda_{2} \rightarrow q'$ and $\lambda_{1}, \lambda_{2} \in \pi$)\\
 \> \> THEN $qP'q$\\
 \> \> ELSE \= if $P = \{\Q_{1}, \ldots, \Q_{i}, \ldots, \Q_{n}\}$ and $q,q' \in \Q_{i}$\\
 \> \> \> then \= $P := \{\Q_{1}, \ldots, \Q_{i-1}, \Q_{i_{1}}, \Q_{i_{2}}, \Q_{i+1}, \ldots, \Q_{n}\}$;\\
 \> \> \> \> $q \in \Q_{i_{1}}$; $q' \in \Q_{i_{2}}$;\\
 \> \> \> \> Refine($P'$);\\
 \textbf{end}\\
\end{tabbing}

\noindent
We are now ready to define the minimization algorithm of a $P\LTA$
$\A$.\\

\begin{tabbing}
\textbf{MinimizationAlgorithm($\A$)} \\
 \textbf{input:} \= Determinized $P\LTA$ $\mathcal{A}$ = $\langle \Pi, \mathcal{Q}, \mathcal{F}, \mathcal{Q}_{f}, \Delta \rangle$\\
 \> An equivalence relation $P = \{ \Q_{f}, \Q\setminus \Q_{f}\}$\\
 \textbf{output:} Minimized and determinized $P\LTA$ $\mathcal{A_{m}}$ = $\langle \Pi, \mathcal{Q}_{m}, \mathcal{F}, \mathcal{Q}_{f_{m}}, \Delta_{m} \rangle$\\
 \textbf{begin} \= \\
 \> Refine($P$, $\A$);\\
 \> Set $\Q_{m}$ to the set of equivalence classes of $P$;\\
 \> /* we denote by $[q]$ the equivalence class of state $q$ w.r.t. $P$ */ \\
 \> For \= all $\lambda$-transitions, for all $\lambda_{1}, \lambda_{2} \in \Lambda$,\\
 \> \> if $\lambda_{1} \rightarrow q$, $\lambda_{2} \rightarrow q' \in \Delta$ and $qPq'$\\
 \> \> then $\lambda_{1}\sqcup\lambda_{2} \rightarrow [q,q'] \in \Delta_{m}$;\\
 \> For all other transitions, $\Delta_{m} := \{(f,[q_{1}],...,[q_{n}]) → [f(q_{1},...,q_{n})]\}$;\\
 \> $\mathcal{Q}_{m_{f}} := \{[q] | q \in \Q_{f}\}$;\\
 \textbf{end}\\


\end{tabbing}

%

\noindent
A \textbf{normalized $P\LTA$} is an $\LTA$ that is a merged,
deterministic and minimized $P\LTA$.

\begin{proposition}{Normalized $P\LTA$ is the best upper-approximation}
Let $\mathcal{A}_{1}$ be a $P\LTA$ and $\mathcal{A}_{2}$ the $P\LTA$
obtained with the minimization algorithm.  Then $\mathcal{A}_{2}$ is a
best upper-approximation of $\mathcal{A}_{1}$ as a normalized $P\LTA$.

\begin{enumerate}
\item $\mathcal{L}(\mathcal{A}_{1}) \subseteq \mathcal{L}(\mathcal{A}_{2})$
\item For any normalized $P\LTA$ $\mathcal{A}_{3}$ based on the same partition as $\mathcal{A}_{1}$, 
$\mathcal{L}(\mathcal{A}_{1}) \subseteq \mathcal{L}(\mathcal{A}_{3}) \implies \mathcal{L}(\mathcal{A}_{2}) \subseteq \mathcal{L}(\mathcal{A}_{3})$
\end{enumerate}
\end{proposition}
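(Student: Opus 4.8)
The plan is to factor the normalization into its two phases — determinization followed by minimization — and to lift the optimality statement of Proposition~\ref{DetBest} through the minimization step by a simulation argument. Write $\mathcal{A}_1^{d}$ for the merged $DP\LTA$ obtained from $\mathcal{A}_1$ by the determinization algorithm, so $\mathcal{A}_2$ is the minimization of $\mathcal{A}_1^{d}$. Since $\mathcal{L}(\mathcal{A}_1)\subseteq\mathcal{L}(\mathcal{A}_1^{d})$ by Proposition~\ref{DetBest}(1), and since every normalized $P\LTA$ is in particular merged and deterministic so that any normalized $\mathcal{A}_3$ with $\mathcal{L}(\mathcal{A}_1)\subseteq\mathcal{L}(\mathcal{A}_3)$ already satisfies $\mathcal{L}(\mathcal{A}_1^{d})\subseteq\mathcal{L}(\mathcal{A}_3)$ by Proposition~\ref{DetBest}(2), it suffices to prove both claims with $\mathcal{A}_1$ replaced by $\mathcal{A}_1^{d}$ and then appeal to transitivity of $\subseteq$. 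We may also first reduce $\mathcal{A}_1^{d}$, removing unreachable states, which changes neither $\mathcal{L}(\mathcal{A}_1^{d})$ nor $\mathcal{A}_2$. So assume henceforth that $\mathcal{A}_1^{d}$ is a reduced merged $DP\LTA$, that $\mathcal{A}_2$ is its minimization, and let $P$ be the equivalence on the states of $\mathcal{A}_1^{d}$ returned by \textbf{Refine}.

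\emph{Proof of (1).} Consider the quotient map $h\colon q\mapsto[q]$ from states of $\mathcal{A}_1^{d}$ to states of $\mathcal{A}_2$. By construction of the minimization algorithm, for every ground transition $f(q_1,\dots,q_n)\to q\in\Delta_1^{d}$ there is a transition $f([q_1],\dots,[q_n])\to[q]\in\Delta_2$, and for every lambda transition $\lambda\to q\in\Delta_1^{d}$ with $\lambda\sqsubseteq\pi$ there is a transition $\lambda_m\to[q]\in\Delta_2$ with $\lambda\sqsubseteq\lambda_m$ (because $\lambda_m$ is a least upper bound of a set containing $\lambda$). Moreover $[q]$ is final in $\mathcal{A}_2$ iff $q$ is final in $\mathcal{A}_1^{d}$, since \textbf{Refine} starts from $\{\Q_f,\Q\setminus\Q_f\}$ and only refines it. A routine induction on the length of a run, using Definitions~\ref{def:run} and~\ref{def:rec} together with the fact that $\eval(s)\leqTF\lambda\leqTF\lambda_m$ whenever $\eval(s)\leqTF\lambda$, shows that $t'\to^{*}_{\mathcal{A}_1^{d}}q$ implies $t'\to^{*}_{\mathcal{A}_2}[q]$; hence $\mathcal{L}(\mathcal{A}_1^{d})\subseteq\mathcal{L}(\mathcal{A}_2)$, which gives (1).

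\emph{Proof of (2).} Let $\mathcal{A}_3$ be a normalized $P\LTA$ on the same partition with $\mathcal{L}(\mathcal{A}_1^{d})\subseteq\mathcal{L}(\mathcal{A}_3)$, and let $\mathcal{R}_{13}$ be a simulation relation witnessing this inclusion, as in the proof of Proposition~\ref{DetBest}(2): $(q,s)\in\mathcal{R}_{13}$ relates a state $q$ of $\mathcal{A}_1^{d}$ to a state $s$ of $\mathcal{A}_3$ that subsumes all its transitions and its finality. The core step is the claim that if $q\,P\,q'$ then $\mathcal{R}_{13}$ relates $q$ and $q'$ to a common state of $\mathcal{A}_3$, i.e. the image of a $P$-class under $\mathcal{R}_{13}$ is a single state. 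Granting this, define $\mathcal{R}\subseteq\Q\times\Q$ by $([q],s)\in\mathcal{R}$ iff $(q,s)\in\mathcal{R}_{13}$; the claim makes $\mathcal{R}$ well defined on $P$-classes, and one checks it is a simulation from $\mathcal{A}_2$ to $\mathcal{A}_3$: a ground transition $f([q_1],\dots,[q_n])\to[q]\in\Delta_2$ is transferred through $\mathcal{R}_{13}$ using any representatives with $f(q_1,\dots,q_n)\to q\in\Delta_1^{d}$; a merged lambda transition $\lambda_m\to[q]\in\Delta_2$ with $\lambda_m=\bigsqcup\{\lambda\mid\lambda\to q''\in\Delta_1^{d},\ q''\,P\,q,\ \lambda\leqTF\pi\}$ is transferred since each such $\lambda$ lies below the label $\lambda_3$ of the unique lambda transition $\lambda_3\to s\in\Delta_3$ of $s$ inside $\pi$ — unique because $\mathcal{A}_3$ is merged — whence $\lambda_m\leqTF\lambda_3$; finality is transferred because $\mathcal{R}_{13}$ preserves it. Therefore $\mathcal{L}(\mathcal{A}_2)\subseteq\mathcal{L}(\mathcal{A}_3)$, which is (2).

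\emph{Main obstacle.} The crux is the claim used in (2): that $P$-equivalent states of $\mathcal{A}_1^{d}$ are sent by $\mathcal{R}_{13}$ to the same state of $\mathcal{A}_3$. The intended route is to show first that their $\mathcal{R}_{13}$-images are $P_{\mathcal{A}_3}$-equivalent in $\mathcal{A}_3$ — this is where one must match the three tests performed by \textbf{Refine} (on ground successors, on passive constants, and on which element of the partition a lambda transition belongs to) against the definitions of \textit{merged} and \textit{deterministic} for $\mathcal{A}_3$, exactly as in the subtle case~(a) of the proof of Proposition~\ref{DetBest}(2) — and then that, $\mathcal{A}_3$ being minimized, $P_{\mathcal{A}_3}$-equivalent states of $\mathcal{A}_3$ are literally equal. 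The first part needs care because the over-approximating $\sqcup$-merge of lambda transitions means $P$ is not a plain bisimulation on $\mathcal{A}_1^{d}$; one argues instead from the \textbf{Refine} invariant that $P$ is the coarsest equivalence compatible with finality, with the ground transition structure modulo $P$, and with the partition. Everything else — the inductions on runs, the bookkeeping with the quantifier $t\leqTF t'$ of Definition~\ref{def:rec}, and the transfers through $\mathcal{R}_{13}$ — is routine, since all modifications only enlarge lambda labels while leaving the set of atoms and the tree structure untouched.
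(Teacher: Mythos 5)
Your decomposition — determinize first, dispose of that half with Proposition~\ref{DetBest}, then handle minimization by a quotient/simulation argument — is a legitimate and arguably cleaner route than the paper's, which instead mirrors the proof of Proposition~\ref{DetBest} directly: it builds simulation relations $\mathcal{R}_1\subseteq\mathcal{Q}_1\times\mathcal{Q}_2$ and $\mathcal{R}_2\subseteq\mathcal{Q}_1\times\mathcal{Q}_3$, composes them into $\mathcal{R}\subseteq\mathcal{Q}_2\times\mathcal{Q}_3$, and in the lambda case computes $\lambda_2=\bigsqcup\{\lambda\mid\lambda\to q\in\Delta_1\wedge qPq_1\}$ and argues $\lambda_3$ must dominate this lub because $\mathcal{A}_3$ is minimized and contains $\mathcal{A}_1$. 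Part (1) of your proposal matches the paper's part (1) and is fine.

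However, part (2) is not actually proved. The pivotal claim — that the $\mathcal{R}_{13}$-image of a whole $P$-class is a single state of $\mathcal{A}_3$ — is exactly the content the proposition needs, and you explicitly defer it as the ``main obstacle,'' offering only an intended route (images are $P_{\mathcal{A}_3}$-equivalent, then minimality forces equality). Nothing in the proposal or in the paper establishes the ingredients that route requires: $\mathcal{R}_{13}$ is a relation, not a function, so even making ``the image'' well defined needs an argument (e.g.\ functionality of the simulation into the merged, deterministic $\mathcal{A}_3$ on reachable states), and the ``Refine invariant'' you invoke — that $P$ is the coarsest equivalence compatible with finality, the ground-transition structure modulo $P$, and the partition — is stated nowhere and not proved. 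Without the claim, the relation $\mathcal{R}$ on classes is not well defined, the transfer of the merged transition $\lambda_m\to[q]$ collapses (you need all representatives $q''\,P\,q$ to be simulated by the \emph{same} state $s$ so that the single in-$\pi$ lambda transition of $s$ dominates every contributing $\lambda$), and $\Lang(\mathcal{A}_2)\subseteq\Lang(\mathcal{A}_3)$ does not follow. So the framework is sound, but the step on which the whole of (2) rests is missing; by contrast the paper, however tersely, does supply an argument for the corresponding step at the level of lambda labels rather than at the level of state identity.
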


\noindent
\textbf{Proof :}\\

Let $P$ be the equivalence relation at the end of the minimization algorithm.\\

$(1)$ Base case : for all lambda transitions of $\mathcal{A}_{1}$
$\lambda \rightarrow q$, there is a transition $\lambda ' \rightarrow
[q]$ in $\mathcal{A}_{2}$ such that $\lambda ' = \bigsqcup \lbrace
\lambda \vert \lambda \rightarrow q' \in \Delta_{1} \wedge
q'Pq\rbrace$.\\ induction case : for all non lambda transitions of
$\mathcal{A}_{1}$ $f(q_{1}, \ldots, q_{n}) \rightarrow q$, there is
the corresponding transition $f([q_{1}], \ldots, [q_{n}]) \rightarrow
[q]$ (where $q \in [q]$, $q_{1} \in [q_{1}], \ldots, q_{n} \in
[q_{n}]$).\\ So $\mathcal{L}(\mathcal{A}_{1}) \subseteq
\mathcal{L}(\mathcal{A}_{2})._{\square}$\\

$(2)$ $\mathcal{A}_{1}$ = $\langle \Pi, \mathcal{Q}_{1}, \mathcal{F}, \mathcal{Q}_{f_{1}}, \Delta_{1} \rangle$, $\mathcal{A}_{2}$ = $\langle \Pi, \mathcal{Q}_{2}, \mathcal{F}, \mathcal{Q}_{f_{2}}, \Delta_{2} \rangle$ and $\mathcal{A}_{3}$ = $\langle \Pi, \mathcal{Q}_{3}, \mathcal{F}, \mathcal{Q}_{f_{3}}, \Delta_{3} \rangle$\\

As $\mathcal{L}(\mathcal{A}_{1}) \subseteq \mathcal{L}(\mathcal{A}_{2})$ (1) and $\mathcal{L}(\mathcal{A}_{1}) \subseteq \mathcal{L}(\mathcal{A}_{3})$, let $\mathcal{R}_{1} : \mathcal{Q}_{1} \times \mathcal{Q}_{2}$ and $\mathcal{R}_{2} : \mathcal{Q}_{1} \times \mathcal{Q}_{3}$ be two simulation relations defining these properties as follows.

Let $q_{1} \in \mathcal{Q}_{1}$ and $q_{2} \in \mathcal{Q}_{2}$, $(q_{1}, q_{2}) \in \mathcal{R}_{1}$ iff
\begin{itemize}
\item $\lambda_{1} \rightarrow q_{1}  \in \Delta_{1}, \  \lambda_{2} \rightarrow q_{2}  \in  \Delta_{2}$ and $\lambda_{1} \sqsubseteq \lambda_{2}$, where $\lambda_{1} , \lambda_{2} \in \Lambda$,\\
or\\
$f(q_{i_{1}}, \ldots, q_{i_{n}}) \rightarrow q_{1}  \in  \Delta_{1}, \ f(q_{i_{1}}', \ldots, q_{i_{n}}') \rightarrow q_{2}  \in \Delta_{2}$  and 
$\forall j \in [1,  n], \ (q_{i_{j}}, q_{i_{j}}') \in \mathcal{R}_{1}$, where $f \in \mathcal{F}_{n}$
\item $q_{1} \in \mathcal{Q}_{f_{1}} \Longleftrightarrow q_{2} \in \mathcal{Q}_{f_{2}}$ \\
\end{itemize}

Let $q_{1} \in \mathcal{Q}_{1}$ and $q_{3} \in \mathcal{Q}_{3}$, $(q_{1}, q_{3}) \in \mathcal{R}_{2}$ iff
\begin{itemize}
\item $\lambda_{1} \rightarrow q_{1}  \in \Delta_{1}, \  \lambda_{3} \rightarrow q_{3}  \in  \Delta_{3}$ and $\lambda_{1} \sqsubseteq \lambda_{3}$, where $\lambda_{1} , \lambda_{3} \in \Lambda$,\\
or\\
$f(q_{i_{1}}, \ldots, q_{i_{n}}) \rightarrow q_{1}  \in  \Delta_{1}, \ f(q_{i_{1}}', \ldots, q_{i_{n}}') \rightarrow q_{3}  \in \Delta_{2}$  and 
$\forall j \in [1,  n], \ (q_{i_{j}}, q_{i_{j}}') \in \mathcal{R}_{2}$, where $f \in \mathcal{F}_{n}$
\item  $q_{1} \in \mathcal{Q}_{f_{1}} \Longleftrightarrow q_{3} \in \mathcal{Q}_{f_{3}}$ \\
\end{itemize}

Let $\mathcal{R} : \mathcal{Q}_{2} \times \mathcal{Q}_{3}$ be a simulation relation such that $(q_{2}, q_{3}) \in \mathcal{R}$ iff $\exists q_{1}
\in \mathcal{Q}_{1}.(q_{1}, q_{2}) \in \mathcal{R}_{1} \wedge (q_{1}, q_{3}) \in \mathcal{R}_{2}$, where $q_{2} \in \mathcal{Q}_{2}$, $q_{3} \in \mathcal{Q}_{3}$.\\


Let $(q_{2}, q_{3}) \in \mathcal{R}$. This means that : 
\begin{itemize}
\item $\lambda_{1} \rightarrow q_{1}  \in \Delta_{1}, \  \lambda_{2} \rightarrow q_{2}  \in  \Delta_{2}, \ \lambda_{3} \rightarrow q_{3}  \in  \Delta_{2}$ and $\lambda_{1} \sqsubseteq \lambda_{2}$ and $\lambda_{1} \sqsubseteq \lambda_{3}$, where $\lambda_{1} , \lambda_{2}, \lambda_{3} \in \Lambda$ (a)\\,
or\\
$f(q_{i_{1}}, \ldots, q_{i_{n}}) \rightarrow q_{1}  \in  \Delta_{1}, \ f(q_{i_{1}}', \ldots, q_{i_{n}}') \rightarrow q_{2}  \in \Delta_{2}, \  f(q_{i_{1}}'', \ldots, q_{i_{n}}'') \rightarrow q_{3}  \in \Delta_{3}$  and 
$\forall j \in [1,  n], \ (q_{i_{j}}, q_{i_{j}}') \in \mathcal{R}_{1}$ and $(q_{i_{j}}, q_{i_{j}}'') \in \mathcal{R}_{2}$, where $f \in \mathcal{F}_{n}$ (b)\\
\item $q_{1} \in \mathcal{Q}_{f_{1}} \Longleftrightarrow q_{2} \in \mathcal{Q}_{f_{2}}$ and $q_{1} \in \mathcal{Q}_{f_{1}} \Longleftrightarrow q_{3} \in \mathcal{Q}_{f_{3}}$ (c),
\end{itemize}
by definition of $\mathcal{R}_{1}$ and $\mathcal{R}_{2}$.\\

(a) We have $\lambda_{1} \rightarrow q_{1}  \in \Delta_{1}$, $\lambda_{2} \rightarrow q_{2}  \in  \Delta_{2}$ and $\lambda_{1} \sqsubseteq \lambda_{2}$. According to the minization algorithm, $\lambda_{2}$ is the least upper bound of all $\lambda \in \Lambda$ such that there exists $q \in \Q_{1}$ such that $\lambda \rightarrow q \in \Delta_{1}$ and $q$ is in the same equivalence classe as $q_{1}$ (i.e., $q \in [q_{1}]$ or $qPq_{1}$). Formally, $\lambda_{2} = \bigsqcup \lbrace \lambda \vert \lambda \rightarrow q \in \Delta_{1} \wedge qPq_{1}\rbrace$.

As $\A_{3}$ is minimized and contains $\mathcal{A}_{1}$, then $\lambda_{3}$ has to contain at least all the $\lambda \in \Lambda$ such that $\lambda \rightarrow q \in \Delta_{1}$ and $qPq_{1}$, or else $\mathcal{A}_{3}$ is not minimized.

So $\lambda_{3} \sqsupseteq \bigsqcup \lbrace \lambda \vert \lambda \rightarrow q \in \Delta_{1} \wedge qPq_{1}\rbrace$, so $\lambda_{2} \sqsubseteq \lambda_{3}$.\\

(b) We can immediately deduce that $\forall j \in [1,  n], \ (q_{i_{j}}', q_{i_{j}}'') \in \mathcal{R}$ by the definition of $\mathcal{R}$.\\

(c) So $q_{2} \in \mathcal{Q}_{f_{2}} \Longleftrightarrow q_{3} \in \mathcal{Q}_{f_{3}}$\\

And thanks to these properties deduced on $\mathcal{R}:\mathcal{Q}_{1}\times\mathcal{Q}_{2}$, we can deduce that $\mathcal{L}(\mathcal{A}_{2}) \subseteq \mathcal{L}(\mathcal{A}_{3})$.\\

As the least upper bound of two elements of a lattice is the best and
unique upper-approximation, this minimization algorithm returns the
best upper-approximation.$_{\square}$\\

\subsection{Refinement of the partition}  

In the previous paragraphs, the partition $\Pi$ was fixed. The
precision of the upper-approximations made during the determinization
algorithm depends on the finess of $\Pi$. For example, if $\Pi$ is of
size 1, all $\lambda$-transitions will be merged into one.

\begin{definition}[Refinement of a partition]

A partition $\Pi_{2}$ refines a partition $\Pi_{1}$ if :
$$\forall \pi_{2} \in \Pi_{2}, \ \exists \pi_{1} \in \Pi_{1}: \ \pi_{2} \sqsubseteq \pi_{1}$$

Let $\mathcal{A}_{1}$ = $\langle \Pi_{1}, \mathcal{Q}, \mathcal{F}, \mathcal{Q}_{f}, \Delta_{1} \rangle$ be a $P\LTA$. The $P\LTA$ $\mathcal{A}_{2}$ = $\langle \Pi_{2}, \mathcal{Q}, \mathcal{F}, \mathcal{Q}_{f}, \Delta_{2} \rangle$ refines $\mathcal{A}_{1}$ if :
\begin{enumerate}
\item $\Pi_{2}$ refines $\Pi_{1}$
\item the transitions of $\Delta_{2}$ are obtained by : $\forall \lambda \rightarrow q \in \Delta_{1}, \ \forall \pi_{2} \in \Pi_{2}, \ \lambda\sqcap\pi_{2} \rightarrow q \in \Delta_{2}$
\end{enumerate}

\end{definition}

Refining an automaton does not modify immediatly the recognized language, but leads to a more precise upper-approximation in the determinization, as illustrated herafter.

\begin{example}

Given $\Pi$ and $\Delta$ of example \ref{ex:part} and a partition $\Pi_{2} = \lbrace ]-\infty, -1[,[-1,0[, [0,0], ]0,+\infty[\rbrace$ that refines $\Pi$,
the set of transitions $\Delta_{2}$ of $P\LTA$ obtained with $\Pi_{2}$ is 
$\Delta_{2} = \lbrace[-3,-1[ \rightarrow q_{1}, [-1,-1] \rightarrow q_{1}, [-5,-2] \rightarrow q_{2}, [3,4] \rightarrow q_{3}, [-3, -1[ \rightarrow q_{4},
[-1,0[ \rightarrow q_{4}, [0,0] \rightarrow q_{4}, ]0,2] \rightarrow q_{4},
 f(q_{1},q_{2}) \rightarrow q_{5}, f(q_{3},q_{4}) \rightarrow q_{6}, f(q_{5},q_{6}) \rightarrow q_{f1}, f(q_{5},q_{6}) \rightarrow q_{f2}\rbrace$.

We now obtain this set of transitions for the deterministic corresponding $P\LTA$ with $\Pi_{2}$ :
$\Delta_{2_{d}} = \lbrace
[-5,-1[ \rightarrow q_{1,2,4}, [-1,0[ \rightarrow q_{1,4}, ]0,4] \rightarrow q_{3,4}, [0, 0] \rightarrow q_{4}, f(q_{1,2,4},q_{1,2,4}) \rightarrow q_{5},
f(q_{1,4},q_{1,2,4}) \rightarrow q_{5},
 f(q_{3,4},q_{3,4}) \rightarrow q_{6}, f(q_{3,4},q_{4}) \rightarrow q_{6}, f(q_{3,4},q_{1,2,4}) \rightarrow q_{6}, f(q_{3,4},q_{1,4}) \rightarrow q_{6}, f(q_{5},q_{6}) \rightarrow q_{f1,f2}\rbrace$.

\end{example}


\section{A Completion Algorithm for $\LTA$}
\label{sec:compl}

We are interested in computing the set of reachable states of an
infinite state system.  In general this set is neither representable
nor computable.  In this paper, we suggest to work within the Tree
Regular Model Checking framework for representing possibly infinite
sets of state. More precisely, we propose to represent configurations
by (built-in)terms and set of configurations (or set of states) by an
$\LTA$.

In addition, we assume that the behavior of the system can be
represented by conditional term rewriting systems ($TRS$), that are
term rewriting systems equipped with conjunction of conditions used to
restrain the applicability of the rule. Our conditional $TRS$, which
extends the classical definition of \cite{BaaderN-book98}, rewrites
terms defined on the concrete domain. This makes them independent from
the abstract lattice.  We first start with the definition of
predicates that allows us to express conditions on $TRS$.

\begin{definition}[Predicates]
\label{def:pred}
Let $\Pred$ be the set of predicates over $\mathcal{D}$. For instance
if $\rho$ is a $n$-ary predicate of $\Pred$ then $\rho:\mathcal{D}^n
\mapsto \{true,false\}$. We extend the domain of $\rho$ to $\TFX^n$ in
the following way:

$\rho(t_1, \ldots,t_n)= \left\{\begin{array}{l}
    \rho(u_1, \ldots, u_n) \mbox{ if } \forall i=1\ldots n: t_i \in \TFI \\
    \hspace*{1cm} \mbox{ where } \forall i=1\ldots n:  u_i=eval(t_i) \\
    false \mbox{ if } \exists j=1\ldots n: t_j \not \in \TFI
    \end{array}\right.$
\end{definition}
\noindent Observe that predicates are defined on built-in terms of the concrete domain. 
If one of the predicate parameters cannot be evaluated into a built-in term, then the predicate returns false and the rule is not applied. 


\begin{definition}[Conditional Term Rewriting System on $\mathcal{T}(\FNI \cup \FI,X)$] 
In our setting, a Term Rewriting System~(TRS) $\R$ is a set of {\em
  rewrite rules} $l \rw r \cond c_1 \wedge \ldots \wedge c_n$, where
$l\in \TFNIX$, $r \in \TFX = \mathcal{T}(\FNI \cup \FI,\X)$, $l \not
\in \X$, $\var(l) \supseteq \var(r)$ and $\forall i=1\ldots n: c_i=
\rho_i(t_1, \ldots, t_m)$ where $\rho_i$ is a $m$-ary predicate of
$\Pred$ and $\forall j=1\ldots m: t_j \in \TFIX \wedge \var(t_j)
\subseteq \var(l)$.
\end{definition}

\begin{example}
Using conditional rewriting rules, the factorial can be encoded as
follows:
\[\begin{array}{l}

fact(x) \rightarrow 1 \cond x\geq 0 \wedge x\leq 1 \\

fact(x) \rightarrow x * fact(x - 1) \cond x\geq 2

\end{array}\]

\end{example}

 The TRS $\R$ and the $\eval$ function induces a rewriting relation
 $\rw_{\R}$ on $\TF$ in the following way~: for all $s,t\in\TF$, we
 have $s \rwR t$ if there exist (1) a rewrite rule $l\rw r \cond
 c_1\wedge \ldots \wedge c_n \in \R$, (2) a position $p\in \pos(s)$,
 (3) a substitution $\sigma: \X\mapsto \TF$ such that $s|_p= l\sigma$,
 $t=\eval(s[r\sigma]_p)$ and $\forall i=1\ldots n: c_i
 \sigma=true$. The reflexive transitive closure of $\rwR$ is denoted
 by $\rw^{*}_{\R}$.\\

Our objective is to compute an $\LTA$ representing the set (or an
over-approximation of the set) of reachable states of an $\LTA$ $\A$
with respect to a TRS $\R$. In this paper, we adopt the completion
approach of ~\cite{GR10,FeuilladeGVTT-JAR04}, which intends to compute
a tree automaton $\aaex^k$ such that $\Lang(\aaex^k) \supseteq
\desc(\Lang(\A))$. The algorithm proceeds by computing the sequence of
automata $\aaex^{0}, \aaex^{1}, \aaex^{2},...$ that represents
successive applications of $\R$. Computing $\aaex^{i+1}$ from
$\aaex^{i}$ is called a {\em one-step completion}. In general the
sequence of automata may not converge in a finite amount of time. To
accelerate the convergence, we perform an abstraction operation which
accelerate the computation. Our abstraction relies on merging states
that are considered to be equivalent with respect to a certain
equivalence relation defined by a set of equations. We now give
details on the above constructions. Then, we show that, in order to be
correct, our procedure has to be combined with an evaluation that may
add new terms to the language of the automaton obtained by completion
or equational abstraction. We shall see that this closure property may
add an infinite number of transitions whose behavior is captured with
a new widening operator for $\LTA$.

\subsection{Computation of $\A_{i+1}$}
\label{sec:comp}

In our setting, $\aaex^{i+1}$ is built from $\aaex^{i}$ by using a
\textit{completion step} that relies on finding critical pairs.  Given
a substitution $\sigma:\X\mapsto\Q$ and a rule $l\rw r \cond c_1
\wedge \ldots \wedge c_n \in \R$, a critical pair is a pair
$(r\sigma', q)$ where $q\in\Q$ and $\sigma'$ is the greatest
substitution w.r.t $\leqTF$ such that $l\sigma \rw^*_{\aaex^i}q$,
$\sigma \geqTF \sigma'$ and $c_1\sigma' \wedge \ldots \wedge
c_n\sigma'$.

Observe that since both $\R$, $\aaex^i$, $\Q$ are finite, there is
only a finite number of such critical pairs.  For each critical pair
such that $r\sigma' \not\rw^*_{\aaex^i}q$, the algorithm adds two new
transitions $r\sigma' \rw q'$ and $q' \rightarrow q$ to $\aaex^i$.


Building critical pairs for a rewriting rule $l\rightarrow r$ requires
to detect all substitutions $\sigma$ such that $l\sigma
\rightarrow^{*} q$, where $q$ is a state of the automaton. In what
follows, we use the standard {\em matching algorithm} introduced in
\cite{FeuilladeGVTT-JAR04}. This algorithm $Matching(l,\A,q)$, which
is described hereafter, matches a linear term $l$ with a state
$q$ in the automaton $\A$.  The solution returned by $Matching$ is a
disjunction of possible substitutions $\sigma_1\vee \ldots \vee
\sigma_n$ so that $l\sigma_i\rightarrow_{\A}^{*}q$.

Let us recall the standard matching algorithm:\\
\[\textrm{(Unfold) }
\dfrac {
  f(s_1, \dots, s_n)\unlhd f(q_1,\dots, q_n)
}{
  s_1 \unlhd q_1 \land \dots \land s_n \unlhd q_n
}\quad \quad
\textrm{(Clash) }
\dfrac {
  f(s_1, \dots, s_n) \unlhd g(q'_1, \dots q'_m)
}{
\bottom}
\]
\[\textrm{(Config) }
\dfrac {
  s \unlhd q
}{
  s \unlhd u_1 \lor \dots \lor s \unlhd u_k \lor \bottom
}
, \forall u_i,\ s.t.\ u_i \rightarrow q \in \Delta, \textrm{if }s \notin \X \textrm. 
\]

Moreover, after each application of one of these rules,
the result is also rewritten into disjunctive normal form, using:
\[
\dfrac{ 
  \phi_1\land (\phi_2 \lor \phi_3)
}{
  (\phi_1 \land \phi_2) \lor (\phi_1 \land \phi_3)
}\quad
\dfrac {\phi_1 \lor \bottom }{\phi_1}\quad
\dfrac {\phi_1 \land \bottom}{\bottom}
\]

However, as our
$TRS$ relies on conditions, we have to extend this matching algorithm
in order to guarantee that each substitution $\sigma_{i}$ that is a
solution of $l \rw r \Leftarrow c_1 \wedge \ldots \wedge c_n$
satisfies $c_1 \wedge \ldots \wedge c_n$. For example, given the rule
$f(x) \rightarrow f(g(x)) \Leftarrow x>3 \wedge x<7$ and the
transitions $[2,8] \rightarrow q_{1}$, $f(q_{1}) \rightarrow q_{2}$,
we have that the set of substitution returned by the matching
algorithm is $\{x \mapsto [2,8] \}$, which is restricted to $[3,7]$.

Restricting substitutions is done by a solver on abstract
domains. Such solver takes as input the lambda transitions of the
automaton and all conditions of the rules, and outputs a set of
substitutions of the form $\sigma'=\{x \mapsto \lambda_x, y \mapsto
\lambda_y\}$. Such solvers exist for various abstract domains (see
\cite{pichardie} for illustrations). In the present context, our
solver has to satisfy the following property:

\begin{property}[Correction of the solver]
\label{def:solve}
  Let $\sigma = \{x_1 \mapsto q_1, \dots, x_k \mapsto q_k\}$ be a
  substitution and $c= c_1 \wedge \dots \wedge c_n$ a conjunction of
  constraints. We consider $\sigma / c = \{ x_i \mapsto q_i \, | \,
  \exists 1\leq j\leq n, x_i \in \var(c_j)\}$ the restriction of the
  substitution to the constrained variables. We also define $S_c = \{ i
  \, | \, \exists 1\leq j\leq n, x_i \in \var(c_j)\}$.

For any tuple $\langle \lambda_i | i \in S_c \rangle$ such that
$\lambda_i \rw_\A^* q_i$, $\solve_\Lambda(\sigma/c, \langle \lambda_i
| i \in S_c \rangle, c)$ is a substitution $\sigma'$ such that (1) if
$i \not\in S_c$, $\sigma'(x_i) = q_i$, and (2) if $i \in S_c$,
$\sigma'(x_i) = \lambda'_i$. In addition, if a tuple of abstract
values $\langle \lambda''_i | i \in S_c \rangle$, satisfies (a)
$\forall i \in S_c$, $\lambda''_i \sqsubseteq \lambda_i$, and (b)
$\forall 1\leq j \leq n$, the substitution $\sigma''/c= \{x_i \mapsto
\lambda''_i \}$ satisfies $c_j$, then $\forall i \in S_c$,
$\lambda''_i \sqsubseteq \lambda'_i$.
\end{property}
Using Prop.\ref{def:solve}, the global function $\solve(\sigma,\A, c_1
\wedge \dots \wedge c_n)$ is defined as:
$$
\solve(\sigma,\A, c_1 \wedge \dots \wedge c_n) =
\mathop{\bigcup}_{\lambda_1 \rw_\A^* q_1,\dots,\lambda_k \rw_\A^* q_k}
\solve_\Lambda(\sigma/c, \langle \lambda_i | i \in S_c \rangle, c)$$

The following theorem ensures that $\solve(\sigma,\A, c_1 \wedge \dots
\wedge c_n)$ is an over-approximation of the solution of the
constraints.

\begin{theorem}
\label{th:solve}
$\solve(\sigma,\A, c_1 \wedge \dots \wedge c_n)$ is an over-approximation of the solutions of the constraints.
\end{theorem}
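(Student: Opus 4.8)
The plan is to prove the statement by exhibiting, for every solution of the constraint system, a substitution in $\solve(\sigma,\A,c_1\wedge\dots\wedge c_n)$ that lies above it for $\leqTF$; this is precisely what it means for $\solve$ to compute an over-approximation of the solution set. Concretely, call $\tau$ a \emph{solution} if it refines $\sigma$ on the constrained variables --- for each $i\in S_c$, $\tau(x_i)$ is a concrete/lattice value with $\tau(x_i)\rw_\A^* q_i$, and $\tau(x_i)=q_i$ for $i\notin S_c$ --- and if $c_j\tau$ holds for every $j\in\{1,\dots,n\}$. The goal is to produce $\sigma'\in\solve(\sigma,\A,c_1\wedge\dots\wedge c_n)$ with $\tau\leqTF\sigma'$. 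Since $\matching$ returns a disjunction $\sigma_1\vee\dots\vee\sigma_m$, the argument is carried out on whichever disjunct $\sigma_t$ satisfies $\tau\leqTF\sigma_t$.

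First I would extract from each run $\tau(x_i)\rw_\A^* q_i$ (for $i\in S_c$) a lattice witness, i.e. some $\lambda_i\in\Lambda$ with $\lambda_i\rw_\A^* q_i$ and $\eval(\tau(x_i))\sqsubseteq\lambda_i$. When $\tau(x_i)$ is a bare lattice constant, $\lambda_i$ is simply the label of the $\lambda$-transition used at the root (Definition~\ref{def:run}); in general one argues by induction on the length of the run, using Definition~\ref{def:run} and the soundness of $\evA$ with respect to the Galois connection, that the element labelling the first $\lambda$-transition reaching $q_i$ has this property. The resulting tuple $\langle\lambda_i\mid i\in S_c\rangle$ satisfies $\lambda_i\rw_\A^* q_i$, hence it is one of the tuples indexing the union that defines $\solve$, so $\sigma' := \solve_\Lambda(\sigma/c,\langle\lambda_i\mid i\in S_c\rangle,c)$ belongs to $\solve(\sigma,\A,c_1\wedge\dots\wedge c_n)$.

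Then I would verify $\tau\leqTF\sigma'$ componentwise with Property~\ref{def:solve}. For $i\notin S_c$, part~(1) gives $\sigma'(x_i)=q_i=\tau(x_i)$. For $i\in S_c$, set $\lambda''_i := \eval(\tau(x_i))$: condition~(a) of Property~\ref{def:solve} holds by the choice of $\lambda_i$, and condition~(b) holds because every $c_j$ constrains only variables in $S_c$ and $c_j\tau$ is true, so the restricted substitution $\{x_i\mapsto\lambda''_i\}$ satisfies each $c_j$. Property~\ref{def:solve} then yields $\lambda''_i\sqsubseteq\lambda'_i=\sigma'(x_i)$, that is $\eval(\tau(x_i))\sqsubseteq\sigma'(x_i)$, which is exactly $\tau(x_i)\leqTF\sigma'(x_i)$ by Definition~\ref{def:sqsubterm}. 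Hence $\tau\leqTF\sigma'$, and since $\solve$ ranges over \emph{all} admissible tuples, every solution is captured by (an element above it in) $\solve(\sigma,\A,c_1\wedge\dots\wedge c_n)$.

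The componentwise comparison is routine once Property~\ref{def:solve} is available. The main obstacle is the first step: turning a run $\tau(x_i)\rw_\A^* q_i$ --- which may interleave the evaluation of a genuinely interpreted term with the $q'\to q$ transitions introduced during completion --- into a single lattice witness $\lambda_i$ with $\lambda_i\rw_\A^* q_i$ and $\eval(\tau(x_i))\sqsubseteq\lambda_i$. This is where the induction on the length of the run and the monotonicity/soundness of $\evA$ do the real work; a secondary point to keep in mind is the $\matching$ disjunction, which forces the statement to be read per disjunct.
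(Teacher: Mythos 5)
Your proof is correct and follows essentially the same route as the paper's: both reduce the theorem to Property~\ref{def:solve} together with the union over all tuples $\langle\lambda_i\mid i\in S_c\rangle$ with $\lambda_i\rw_\A^*q_i$ in the definition of $\solve$. Yours is in fact the more careful rendering --- you make the notion of ``solution'' explicit, extract the witness tuple from the runs so that hypothesis~(a) of Property~\ref{def:solve} actually holds, and only then conclude $\lambda''_i\sqsubseteq\lambda'_i$, steps the paper's proof leaves implicit.
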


\begin{proof}
By Prop.\ref{def:solve}, we have that for any tuple $\langle \lambda_i
| i \in S_c \rangle$ such that $\lambda_i \rw_\A^* q_i$, then
$\solve_\Lambda(\sigma/c, \langle \lambda_i | i \in S_c \rangle, c)$
is a substitution $\sigma'$ such that if $i \in S_c$, $\sigma'(x_i) =
\lambda'_i$.  Let $\langle \lambda''_i | i \in S_c \rangle$ be a tuple
such that $\forall 1\leq j \leq n$, we have that the sustitution
$\sigma''/c= \{x_i \mapsto \lambda''_i \}$ satisfies $c_j$. Thanks to
Prop.\ref{def:solve}, we have that $\forall i \in S_c$, $\lambda''_i
\sqsubseteq \lambda'_i$. Since for all $i \in S_c$, $\lambda'_i$ is
returned by the solver, we can deduce that the set of substitutions
returned by the solver is an over-approximations of the solutions of
the constraints.
\end{proof}

Depending of the abstract domain, defining a solver that satisfies the
above property may be complex. However, we shall now see that an easy
and elegant solution can already be obtained for interval of
integers. As we shall see in Section \ref{sec:java}, such lattices act
as a powerful tool to simplify analysis of Java programs. Observe that
the algorithm for computing $\solve_\Lambda(\sigma/c, \langle
\lambda_i | i \in S_c \rangle, c)$ depends on the lattice $\Lambda$
and on the type of constraints of $c$. If $c$ is a conjunctions of
linear constraints and $\Lambda$ the lattice of intervals, the
algorithm computing $\solve_\Lambda(\sigma,\langle
\lambda_1,\dots,\lambda_k \rangle , c_1 \wedge \dots \wedge c_n)$ is:
\begin{enumerate}
\item $P_1$ is the convex polyhedron defined by the constraints $c_1 \wedge \dots \wedge c_n$,
\item $P_2$ is the box defined by the constraints $x_1 \in \lambda_1,\dots x_k \in \lambda_k$,
\item if $P_1 \sqcap P_2$, then we project $P_1 \sqcap P_2$ on each
  dimension~({\it i.e.} on each variable $x_k$) to obtain $k$ new
  intervals. Otherwise, $\solve_\Lambda(\sigma,\langle
  \lambda_1,\dots,\lambda_k \rangle , c_1 \wedge \dots \wedge c_n) =
  \emptyset$.
\end{enumerate}

\begin{definition}[Matching solutions of conditional rewrite rules]
\label{def:matchcond}
Let $\A$ be a tree automaton, $rl=l \rw r \cond c_1 \wedge \ldots
\wedge c_n$ a rewrite rule and $q$ a state of $\A$. The set of all
possible substitutions for the rewrite rule $rl$ is $\Omega(\A,rl,q)=
\{\sigma' \sep \sigma\in\matching(l,\A,q) \wedge \sigma'\in
\solve(\sigma,\A,c_1\wedge \ldots\wedge c_n) \wedge \nexists \sigma'':
r\sigma' \sqsubseteq r\sigma'' \rwA^* q\}$.
\end{definition}

Once the set of all possible restricted substitutions $\sigma_{i}$ has
been obtained, we have to add the rules $r\sigma_{i} \rightarrow^{*}
q$ in the automaton.  However, the transition $r \sigma_{i} \rw q$ is
not necessarily a normalized transition of the form $f(q_1, \ldots,
q_n) \rw q$, which means that it has to be normalized
first. Normalization is defined by the following algorithm.

\begin{definition}[Normalization]
\label{def:normalization}
Let $s \in \TFQ$, $q \in \Q$, $\FI$ the set of concrete interpretable
symbols used in the $TRS$, and $\A = \langle \F, \Q, \Q_{f},\Delta
\rangle$ an $\LTA$, where $\ = \FI \cup \FNI^{\#}$, and $\alpha: \FI
\rightarrow \FI^{\#}$ the abstraction function.  A {\em new state} is
a state of $\Q$ not occurring in $\Delta$. $Norm(s \rightarrow^{*} q)$
returns the set of normalized transitions deduced from $s$. $Norm(s
\rightarrow^{*} q)$ is inductively defined by:

\begin{enumerate}
\item if $s \in \FI^{0}$ (i.e., in the concrete domain used in rewrite
  rules), $Norm(s \rightarrow^{*} q) = \{\alpha(s) \rightarrow q\}$.
\item if $s \in \FNI^{0} \cup \FI^{\#^{0}}$ then $Norm(s
  \rightarrow^{*} q) = \{s \rightarrow q\}$,
\item if $s = f(t_1,\ldots,t_n)$ where $f \in \FNI^{n} \cup \FI^{n}$,
  then $Norm(s \rightarrow^{*} q) = \{f(q_{1}', \ldots, q_{n}')
  \rightarrow q\} \cup Norm(t_{1} \rightarrow q_{1}') \cup \ldots \cup
  Norm(t_{n} \rightarrow q_{n}')$ where for $i=1\ldots n$, $q'_i$ is
  either:
\begin{itemize}
\item the right-hand side of a transition of $\Delta$ such that $t_{i}
  \rightarrow_{\Delta}^{*} q_{i}'$
\item or a new state, otherwise.
\end{itemize}
\end{enumerate}
\end{definition}
\noindent
Observe that the normalization algorithm always terminates. We
conclude by the formal characterization of the one step completion.


\begin{definition}[One step completed automaton $\comp_{\R}(\A)$]
\label{def:completion-one-step}
  Let $\A= \langle \F, \Q, \Q_f, \Delta \rangle$ be a tree automaton, 
  $\R$ be a left-linear TRS. 
  We denote by $\comp_{\R}(\A)$ the one step completed automaton $\comp_{\R}(\A)= \langle \F, \Q', \Q_f,
\Delta' \rangle$ where:

$$ \Delta' = \Delta \cup \bigcup_{l\rightarrow r\in \R,\: q\in \Q,\:
  \sigma \in \Omega(\A,l\rightarrow r,q)} Norm(r\sigma \rightarrow^{*}
q') \cup\{q' \rw q\} $$

\noindent
where $\Omega(\A,l\rightarrow r,q)$ is the set of all possible substitutions defined in Def.\ref{def:matchcond}, $q' \notin \Q$ a new state and $\Q'$ contains all the states
of $\Delta'$.

\end{definition}

\subsection{Equational Abstraction}
\vspace{-0.2cm}
As we already said, completion may not terminate. In order to enforce
termination of the process, we suggest to merge states according to a
set {\em approximation equations} $E$. An approximation equation is of
the form $u=v$, where $u,v\in\TFNIX$.  Let $\sigma: \X \mapsto \Q$ be
a substitution such that $u\sigma \rw_{\A_{\R}^{i+1}} q$, $v\sigma
\rw_{\A_{\R}^{i+1}} q'$ and $q\neq q'$, then we know that some terms
recognized by $q$ and $q'$ are equivalent modulo $E$. An
over-approximation of $\aaex^{i+1}$, which we denote $\aapprox^{i+1}$,
can be obtained by merging states $q$ and $q'$.

\begin{definition}[$merge$]
\label{merge}
  Let $\A= \langle \F, \Q, \Q_F, \Delta \rangle$ be an $\LTA$ and
  $q_1,q_2$ be two states of $\A$. We denote by $merge(\A,q_1, q_2)$ the tree
  automaton where each occurrence of $q_2$ is replaced by $q_1$.
\end{definition}

\textbf{Equations on interpretable terms.}  In what follows, we need
to extend approximation equations to built-in terms. Indeed, as
illustrated in the following example, approximation equations defined
on $\TFNIX$ are not powerful enough to ensure termination.

\begin{example}
\label{eq}
Let $f(x) \rightarrow f(x+1)$ be a rewrite rule, $\{ [1,1] \rightarrow
q_{1}, [2,2] \rightarrow q_{2}, f(q_{2}) \rightarrow q_{f}\}$ be
transitions of an $\LTA$, then successive completion and normalization
steps will add transitions $q_{2} + q_{1} \rightarrow q_{3}$, $q_{3} +
q_{1} \rightarrow q_{4}$, $q_{4} + q_{1} \rightarrow q_{5}$, \ldots,
$q_{i}+q_{1}\rightarrow q_{i+1}$, \ldots Unfortunately, as classical
equations do not work on terms with interpretable symbols, this
infinite behaviour cannot be captured.
\end{example}

We define a new type of equation which works on interpretable terms,
that are applied with conditions. Such equations have the form $u=v
\cond c_1 \wedge \ldots \wedge c_n$, where $u,v \in \mathcal{T}(\FNI
\cup \FI, \mathcal{X})$. 
We observe that we can almost use the same matching algorithm than for
completion. The first main difference is that we need to match a term
$t \in \mathcal{T}(\FNI \cup \FI, \mathcal{X})$ built on interpreted
symbols on terms of $\mathcal{T}(\FNI \cup \FI^{\#}, \mathcal{X})$
recognized by the LTA $\A$. The solution is to use the same matching
algorithm on $\alpha(t)$ and $\A$, {\em i.e}
$Matching(\alpha(t),\A,q)$. Contrary to the completion case, we do not
need to restrict the substitutions obtained by the matching algorithm
with respect to the constraints of the equation, but simply guarantee
that such constraints are satisfiable, i.e., $\solve(\sigma,\A, c_1
\wedge \dots \wedge c_n)\not= \emptyset$.

\begin{example}
Equation $x = x + 1 \cond x > 3$ can be used to merge states $q_{4}$
and $q_{5}$ in Ex. \ref{eq}.
\end{example}

\begin{theorem}
\label{th:eq}

Let $\A$ be an $\LTA$ and $E$ a set of equations. We denote by
$\simp^!_E$ the transformation of $\A$ by merging equivalent states
according to $E$. The
language of the resulting automaton $\A '$ such that $\A \simp^!_E \A
'$ is an over-approximation of the language of $\A$, i.e.,
$\Lang(\A)\subseteq \Lang(\A')$.
\end{theorem}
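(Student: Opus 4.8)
The plan is to reduce the statement to the case of a single application of the elementary operation $merge$ from Definition \ref{merge}, and then to observe that one merge can only enlarge the recognised language. First I would note that $\simp^{!}_{E}$ is by construction the iteration of $merge$ on pairs of states deemed equivalent by $E$, and that each such application strictly decreases the number of states; hence the process terminates and $\A \simp^{!}_{E} \A'$ means that $\A'$ is reached from $\A$ by a finite chain $\A = \A_0, \A_1, \dots, \A_m = \A'$ with $\A_{i+1} = merge(\A_i, q_i, q'_i)$ for suitable states. By transitivity of language inclusion it then suffices to prove $\Lang(\A) \subseteq \Lang(merge(\A, q_1, q_2))$ for an arbitrary pair $q_1,q_2$; in particular $E$ plays no role in the inequality itself, it only governs which pairs get merged.

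For a single merge, write $\A' = merge(\A, q_1, q_2)$ and let $\mu$ be the renaming of $\mathcal{T}(\F \cup \Q)$ that maps $q_2$ to $q_1$ and fixes every other state and every symbol of $\F$. Directly from Definition \ref{merge} one gets $\Delta' = \{\mu(c) \rw \mu(q) \mid c \rw q \in \Delta\}$ and $\Q_F' = \{\mu(q) \mid q \in \Q_F\}$. The core lemma I would establish is that $\mu$ is a simulation of $\A$ into $\A'$: for all $s,t \in \mathcal{T}(\F \cup \Q)$, $s \rw_{\A} t$ implies $\mu(s) \rw_{\A'} \mu(t)$, and hence $s \rw^{*}_{\A} t$ implies $\mu(s) \rw^{*}_{\A'} \mu(t)$ by induction on derivation length. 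This is checked by running through the three clauses of Definition \ref{def:run} at the position where the step occurs: in the passive-constant clause and the function-symbol clause $\mu$ is the identity on the symbol involved and commutes with subterm replacement, so the $\mu$-image of the transition used, which lies in $\Delta'$, reproduces the step; in the lattice clause the redex $s|_p$ lies in $\TFI$ and hence contains no state, so $\mu$ fixes it, $eval(\mu(s)|_p) = eval(s|_p)$, and the comparison $eval(s|_p) \sqsubseteq \lambda$ that licensed the use of $\lambda \rw q$ equally licenses the use of $\lambda \rw \mu(q) \in \Delta'$.

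With the simulation in hand the conclusion is immediate. Take $t \in \Lang(\A)$; by Definition \ref{def:rec} there are $q_f \in \Q_F$ and $t' \in \TF$ with $t \sqsubseteq t'$ and $t' \rw^{*}_{\A} q_f$. Since $t'$ carries no state, $\mu(t') = t'$, so the simulation yields $t' \rw^{*}_{\A'} \mu(q_f)$, and $\mu(q_f) \in \Q_F'$ because $q_f \in \Q_F$. Thus $t \sqsubseteq t'$ with $t' \rw^{*}_{\A'} \mu(q_f)$ and $\mu(q_f)$ final, i.e.\ $t \in \Lang(\A', \mu(q_f)) \subseteq \Lang(\A')$. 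Chaining this with the reduction of the first paragraph gives $\Lang(\A) \subseteq \Lang(\A')$ for the whole transformation.

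The only delicate point, and the one I would spell out carefully, is the lattice clause of Definition \ref{def:run}: one must check that renaming automaton states leaves both the value $eval(s|_p)$ and the order comparison $eval(s|_p) \sqsubseteq \lambda$ unchanged, so that the lattice structure of the model cannot interfere with what is otherwise a purely syntactic renaming argument. This holds because the subterms on which the interpreted semantics acts are elements of $\TFI$ and therefore contain no states. Everything else — that homomorphic state renamings commute with subterm replacement and hence transport runs, and that a terminating iteration of such renamings composes — is routine.
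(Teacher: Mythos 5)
Your proof is correct and follows essentially the same route as the paper: the paper argues directly that any run $t \rw^{*}_{\A} q$ is mapped to a run $t \rw^{*}_{\A'} Q$ in the quotient automaton where $q$'s equivalence class $Q$ replaces $q$, which is exactly your renaming-simulation argument, just stated over whole equivalence classes instead of iterated single merges. Your treatment is in fact more careful than the paper's one-line proof, notably in checking the lattice clause of the transition relation (that $eval$ and $\sqsubseteq$ are unaffected by state renaming since terms of $\TFI$ contain no states), a point the paper leaves implicit.
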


\begin{proof}
Let $\A$ and $\A '$ two automata and $E$ be a set of equations such
that $\A \simp^!_E \A '$. The set of transition of $\A'$ is the same
as $\A$ with states merged according to equivalence classes determined
by $E$.  For all $t \in \TFX$, for all states $q$ of $\A$, let $Q =
\{q_{1}, \ldots, q, \ldots, q_{n}$ an equivalence class determined by
$E$.  We have that $t \in \Lang(\A,q) \Rightarrow t
\rightarrow^{*}_{\A} q \Rightarrow t \rightarrow^{*}_{\A '} Q
\Rightarrow t \in \Lang(\A ',Q)$.
\end{proof}


\subsection{Evaluation and Correctness}
In this section, we formally define completion on $\LTA$ and its
correctness. We first start with the evaluation of an $\LTA$.

\paragraph{Evaluation of a Lattice Tree Automaton.}

We observe that any set of concrete terms that contains the term $1 +
2$ should also contains the term $3$. While, this canonical property
can be naturally assumed when building the initial set of states, it
may eventually be broken when performing a completion step or by
merging states. Indeed, let $f(x) \rightarrow f(x + 1)$ be a rewrite
rule and $\sigma:x \mapsto q_{2}$ a substitution, a completion step
applied on $\{q_{1} \rightarrow [1,1], q_{2} \rightarrow [2,3],
f(q_{2}) \rightarrow q_{f}\}$ will add the rule $f(q_3)\rightarrow
q_4$, $q_{2} + q_{1} \rightarrow q_{3}$, and $q_3\rightarrow
q_f$. Since the language recognized by $q_3$ contains the term $q_{2}
+ q_{1}$, it should also contain the term $[3,4]$. Evaluation of this
set of transitions will add the transition $[3,4] \rightarrow q_{3}$.
This is done by applying the $propag$ function.

\begin{definition}[$propag$]
\label{def:propag}
  \[propag(\Delta)= \left\{\begin{array}{l}
      \Delta \mbox{ if } \exists \lambda \rw q \in \Delta \wedge
      eval(f(\lambda_1, \ldots,\lambda_k)) \sqsubseteq \lambda
      \\ \Delta \cup \{eval(f(\lambda_1, \ldots, \lambda_k)) \rw q\}, otherwise.
      \end{array}\right.\]

$\forall f \in \FI^{\#^k}: \forall q, q_1,\ldots, q_k \in \Q:\forall \lambda_1,
  \ldots, \lambda_k \in \Lambda: f(q_1, \ldots, q_k) \rw q \in
  \Delta \:\wedge$ $\{\lambda_1 \rw^*_{\Delta} q_1, \ldots, \lambda_k
  \rw^*_{\Delta} q_k \} \subseteq \Delta$.
\end{definition}

Using $propag$, we can extend the $\eval$ function to sets of
transitions and to tree automata in the following way.

\begin{definition}[$eval$ on transitions and automata]
\label{def:eval}
 ~\par
Let $\mu X$ the least fix-point obtained by iterating $propag$.
\begin{itemize}
\item
$\eval(\Delta)= \mu X.propag(X)\cup \Delta$ and
\item 
$\eval(\aut)= \langle \F, \Q, \Q_f, \eval(\Delta)\rangle$
\end{itemize}
\end{definition}

\begin{example}
Let $\Delta = \lbrace [3,6] \rightarrow q_{1}, [2,8] \rightarrow
q_{2}, q_{1} + q_{2} \rightarrow q_{3}, f(q_{3}) \rightarrow
q_{f}\rbrace$, then $propag$ will evaluate the term $[3,6] + [2,8]$
contained in the transition $q_{1} + q_{2} \rightarrow q_{3}$, and add
the transition $[5,14] \rightarrow q_{3}$ to the automaton.
\end{example}

\begin{theorem}
\label{th:eval}
$\Lang(\A) \subseteq \Lang(\eval(\A))$
\end{theorem}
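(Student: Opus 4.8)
The plan is to use the fact that $\eval$ acts on an automaton purely by \emph{adding} lambda transitions, leaving the alphabet, the states and the final states untouched; an inclusion of transition sets then lifts to an inclusion of runs, and hence of recognized languages.

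First I would record that $propag$ is extensive: by Definition~\ref{def:propag}, $propag(\Delta)$ equals either $\Delta$ itself or $\Delta$ extended with a single transition $eval(f(\lambda_1,\dots,\lambda_k)) \rw q$, so $\Delta \subseteq propag(\Delta)$ in every case. Ordering (possibly infinite) sets of transitions by inclusion, the operator $X \mapsto propag(X) \cup \Delta$ is monotone, so by Knaster--Tarski its least fixpoint $\eval(\Delta)$ exists, and since it dominates $propag(\emptyset) \cup \Delta$ we get $\Delta \subseteq \eval(\Delta)$. By Definition~\ref{def:eval}, $\eval(\A) = \langle \F, \Q, \Q_f, \eval(\Delta) \rangle$, so $\eval(\A)$ has the same signature, the same states and the same final states as $\A$, together with a larger transition set.

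Next I would check that the move relation $\rw_{\A}$ of Definition~\ref{def:run} is monotone in $\Delta$: each of its three clauses asserts the \emph{existence} of a suitable transition of $\A$ (and the third clause refers back to $\rw_{\A}$ at a proper subterm), so by induction on the term structure enlarging $\Delta$ can only add moves. Consequently $\rw_{\A} \subseteq \rw_{\eval(\A)}$ and, passing to reflexive--transitive closures, $\rw^{*}_{\A} \subseteq \rw^{*}_{\eval(\A)}$. Now fix a state $q$ and $t \in \Lang(\A,q)$; by Definition~\ref{def:rec} there is a term $t' \in \TF$ with $t \sqsubseteq t'$ and $t' \rw^{*}_{\A} q$. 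The relation $\sqsubseteq$ on terms and the carrier $\TFAL$ depend only on $\F$ and $\Lambda$, which are unchanged, and $t' \rw^{*}_{\eval(\A)} q$ by the above, so $t \in \Lang(\eval(\A),q)$. Hence $\Lang(\A,q) \subseteq \Lang(\eval(\A),q)$ for every $q$, and since the set of final states is the same, taking the union over $q \in \Q_f$ gives $\Lang(\A) \subseteq \Lang(\eval(\A))$.

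The one point that deserves care is the well-definedness of $\eval(\Delta)$ itself: because lambda transitions range over the possibly infinite lattice $\Lambda$, $propag$ may have to add infinitely many transitions, so one cannot argue by termination and must instead work in the complete lattice of transition sets and invoke the Knaster--Tarski fixpoint theorem. Once that is granted, the argument is routine monotonicity bookkeeping: only extensivity of $propag$ and monotonicity of $\rw_{\A}$ in $\Delta$ are used.
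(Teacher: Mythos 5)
Your proposal is correct and follows essentially the same route as the paper: establish that $propag$ is extensive, hence $\Delta \subseteq \eval(\Delta)$, and conclude that the language can only grow. You merely fill in details the paper leaves implicit (monotonicity of $\rw_{\A}$ in the transition set and well-definedness of the fixpoint), which is fine.
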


\begin{proof}
By definition of $propag$ (Def.\ref{def:propag}), we have that $propag(\Delta) = Delta \mbox{ if } \exists \lambda \rw q \in \Delta \wedge eval(\lambda_1
      \bullet \ldots \bullet \lambda_k) \sqsubseteq \lambda$ or $propag(\Delta) = \Delta \cup \{eval(\lambda_1 \bullet \ldots \bullet \lambda_k) \rw q$. 
      In each case, $\Delta \subseteq propag(\Delta)$.

By definition of $eval$ (Def.\ref{def:eval}), $\eval(\Delta)= \mu X.propag(X)\cup \Delta$. Since 
$\Delta \subseteq propag(\Delta)$, we have that $\Delta \subseteq eval(\Delta)$.  Then we can deduce that
$\Lang(\A) \subseteq \Lang(\eval(\A))$.
\end{proof}

Observe that the fixpoint computation may not terminate. Indeed,
consider $\Delta = \lbrace[3,6] \rightarrow q_{1}, [2,8] \rightarrow
q_{2}, q_{1} + q_{2} \rightarrow q_{2}\rbrace$. The first iteration of
the fixpoint will evaluate the term $[3,6] + [2,8]$ recognized by
$q_{1} + q_{2} \rightarrow q_{2}$, which adds the transition $[5,14]
\rightarrow q_{2}$. Since a new element is in the state $q_{2}$, the
second iteration will evaluate the term $[3,6] + [5,14]$ recognized by
the transition $q_{1} + q_{2} \rightarrow q_{2}$, and will add the
transition $[8,20] \rightarrow q_{2}$. The third iteration will
evaluate the term $[3,6] + [8,20]$ to $q_{2}$ and this pattern will be
repeated in further operations.  Since there will always be a new
element of the lattice that will be associated to $q_{2}$, the
computation of the evaluation will not terminate. It is thus necessary
to apply a widening operator $\nabla_\Lambda: \Lambda \times \Lambda
\mapsto \Lambda$ to force the computation of $propag$ to
terminate. For example, if we apply such a widening operator on the
example above, after 3 iterations of the $propag$ function, the
transitions: $[2,8] \rightarrow q_{2}$, $[5,14] \rightarrow q_{2}$,
$[8,20] \rightarrow q_{2}$ could be replaced by $[2,+\infty[
    \rightarrow q_{2}$.

\begin{definition}[Automaton completion for $\LTA$]
Let $\A$ be a tree automaton, $\R$ a TRS and $E$ a set of
equations. At a step $i$ of completion, we denote by $\aapprox^{i}$
the $\LTA$ such that $\A^{i}_{\R} \simp^!_E \aapprox^{i}$.
\begin{itemize}
\item $\aapprox^0= \A$,
\item Repeat $\aapprox^{n+1}=\A'$ with
  $\comp_{\R}(eval(\aapprox^n))\simp^!_E \A''$ and $eval(\A'') = A'$,
\item Until a fixpoint $\aapprox^{*}= \aapprox^{k}=\aapprox^{k+1}$
  (with $k \in \NN)$ is joint.
\end{itemize}
\end{definition}

\noindent
A running example is described in section \ref{ss:example}.

\begin{theorem}[Completeness]
  \label{completeness}
  Let $\R$ be a left-linear TRS, $\A$ be a tree automaton and $\nr$ be a set of
  linear equations. If completion terminates on $\aapprox^*$ then
$$\Lang(\aapprox^*)\supseteq \desc(\Lang(\A))$$
\end{theorem}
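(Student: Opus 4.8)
The plan follows the descendant-closure scheme of classical tree-automata completion (\cite{FeuilladeGVTT-JAR04,GR10}), adapted to $\LTA$. Fix $t\in\Lang(\A)$ and a derivation $t\rw^*_{\R}s$; I prove $s\in\Lang(\aapprox^*)$ by induction on its length. For monotonicity, which already covers the base case $t=s$: one completion step is the composition of $\comp_{\R}$ (which only adds transitions, Def.\ref{def:completion-one-step}), $\simp^!_E$ (which only merges states, Theorem \ref{th:eq}) and $eval$ (which only adds transitions, Theorem \ref{th:eval}); hence $\Lang(\aapprox^i)\subseteq\Lang(\aapprox^{i+1})$ for every $i$, so $\Lang(\A)=\Lang(\aapprox^0)\subseteq\Lang(\aapprox^*)$. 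For the induction step, split the derivation as $t\rw^*_{\R}s'\rw_{\R}s$; the induction hypothesis gives $s'\in\Lang(\aapprox^*)$, so it remains to prove the one-step claim: $s'\in\Lang(\aapprox^*)$ and $s'\rw_{\R}s$ imply $s\in\Lang(\aapprox^*)$. This is where termination is used: $\aapprox^*=\aapprox^{k}=\aapprox^{k+1}$, and $\aapprox^{k+1}$ is obtained from $\comp_{\R}(eval(\aapprox^*))$ by the language-non-decreasing operations $\simp^!_E$ and $eval$, so it suffices to show $s\in\Lang(\comp_{\R}(eval(\aapprox^*)))$.

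Next I unfold the rewrite step: there are a rule $l\rw r\cond c_1\wedge\dots\wedge c_n\in\R$, a position $p$ in $s'$ and a concrete substitution $\mu$ with $s'|_p=l\mu$, $s=eval(s'[r\mu]_p)$ and $c_i\mu$ true for all $i$. Pick a ground term $t'$ with $s'\leqTF t'$ and a final state $q_f$ with $t'\rw^*_{\aapprox^*}q_f$, and let $q$ be the state reached at position $p$ along this run, so $t'[q]_p\rw^*_{\aapprox^*}q_f$ and, since $l\in\TFNIX$ has an uninterpreted root, $l\mu\leqTF t'|_p\rw^*_{\aapprox^*}q$ by Def.\ref{def:sqsubterm}. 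Completeness of the matching algorithm of \cite{FeuilladeGVTT-JAR04} (applicable because $l$ is linear) yields $\sigma\in\matching(l,\aapprox^*,q)$ with $l\sigma\rw^*_{\aapprox^*}q$ and, for each $x\in\var(l)$, a ground term below which $\mu(x)$ sits and that reduces to $\sigma(x)$. Since $\mu$ satisfies $c_1\wedge\dots\wedge c_n$, correctness of the solver (Property \ref{def:solve}, Theorem \ref{th:solve}) gives $\sigma'\in\solve(\sigma,\aapprox^*,c_1\wedge\dots\wedge c_n)$ with $\mu(x)\leqTF\sigma'(x)$ on every constrained variable. By Def.\ref{def:matchcond}, either $\sigma'\in\Omega(\aapprox^*,l\rw r,q)$, in which case $\comp_{\R}(eval(\aapprox^*))$ contains $Norm(r\sigma'\rw^* q'')\cup\{q''\rw q\}$ and hence $r\sigma'\rw^* q$ there, or the critical pair $(r\sigma',q)$ is already subsumed by a reduction $r\sigma''\rw^*_{\aapprox^*}q$ with $r\sigma'\leqTF r\sigma''$; in either case we obtain a substitution $\bar\sigma$ (valued in states and lattice elements) with $r\sigma'\leqTF r\bar\sigma$ and $r\bar\sigma\rw^* q$ in $\comp_{\R}(eval(\aapprox^*))$.

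It remains to assemble an accepting run for $s$ in $\comp_{\R}(eval(\aapprox^*))$. Build a ground term $\hat r$ from $r$ by instantiating each constrained variable with the lattice element $\bar\sigma(x)$ and each remaining variable with a ground term $u_x$ that reduces to $\bar\sigma(x)$ and lies above $\mu(x)$ (supplied by matching completeness); then $r\mu\leqTF\hat r$, and $\hat r\rw^* r\bar\sigma\rw^* q$ in $\comp_{\R}(eval(\aapprox^*))$. Setting $t''=t'[\hat r]_p$ gives a ground term with $s'[r\mu]_p\leqTF t''$, hence — since $\leqTF$ on interpreted subterms is defined modulo $eval$ (Def.\ref{def:sqsubterm}) and the $\LTA$ run relation evaluates interpreted subterms on the fly (Def.\ref{def:run}) — $s=eval(s'[r\mu]_p)\leqTF t''$, while $t''=t'[\hat r]_p\rw^* t'[q]_p\rw^* q_f$ in $\comp_{\R}(eval(\aapprox^*))$. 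By Def.\ref{def:rec}, $s\in\Lang(\comp_{\R}(eval(\aapprox^*)))\subseteq\Lang(\aapprox^{k+1})=\Lang(\aapprox^*)$, closing the induction and establishing $\desc(\Lang(\A))\subseteq\Lang(\aapprox^*)$.

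The crux is the middle step: certifying that the substitution actually producing transitions in the completion step (namely $\sigma'$, or the subsuming $\sigma''$) lies, through $r$, above the concrete witness $\mu$ of the rewrite. This forces a careful interplay of three approximations — completeness of the matching algorithm, the over-approximation guarantee of the solver, and the subsumption clause of $\Omega$ — together with bookkeeping of how $\leqTF$ on interpreted subterms behaves under $eval$ and under the on-the-fly evaluation in the run relation. A secondary point, needed to invoke the fixpoint, is that the $eval$/widening closure, although imprecise, only ever adds transitions, so that passing from $\comp_{\R}(eval(\aapprox^*))$ to $\aapprox^{k+1}$ cannot shrink the recognized language.
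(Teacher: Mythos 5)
Your proof is correct and follows essentially the same route as the paper's: first establish $\Lang(\A)\subseteq\Lang(\aapprox^*)$ by monotonicity of completion, merging and evaluation, then induct on the length of the rewriting derivation and use the fixpoint property together with the $\leqTF$-based language definition to close one rewrite step. The only difference is one of explicitness: where the paper simply asserts that, $\aapprox^*$ being a fixpoint, $l\sigma\rw^*q''$ and the satisfied conditions yield $r\sigma\rw^*q''$, you unfold that assertion via matching completeness, the solver guarantee (Property~\ref{def:solve}) and the subsumption clause of $\Omega$ — a more detailed rendering of the same argument, not a different one.
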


\begin{proof}
We first show that $\Lang(\aapprox^*)\supseteq \Lang(\A)$. By
definition, completion only adds transitions to $\A$. Hence, we
trivially have $\Lang(\A^{1}_{\R})\supseteq \Lang(\A)$. Thanks
to Theorem \ref{th:eq}, we also know that $\A^{1}_{\R,E}$, the transformation of $\A^{1}_{\R} $ by merging states equivalent w.r.t. $E$, is such that 
$\Lang(\A^{1}_{\R,E}) \supseteq \Lang(\A^{1}_{\R})$. Hence, by transitivity of
$\supseteq$, we know that $\Lang(\A^{1}_{\R,E}) \supseteq \Lang(\A)$. This can
be successively applied to $\A^{2}_{\R,E}, \A^{3}_{\R,E},
\A^{4}_{\R,E}, \ldots$ so that $\Lang(\aapprox^*)\supseteq \Lang(\A)$.
Now, the next step of the proof consists in showing that for all term
$s \in \Lang(\A)$ if $s \rightarrow_{\R} t$ then $t \in
\Lang(\aapprox^*)$. First, note that by definition of application of
$E$ final states are preserved, i.e. if $q$ is a final state in $\A$
then if $\A '$ is the automaton where $E$ are applied in $\A$ and $q$
has been renamed in $q'$, then $q'$ is a final state of $\A '$. Hence
it is enough to prove that for all term $s \in \Lang(\A, q)$ if $s
\rightarrow^{*}_{\R} t$ then $\exists q' : t \in \Lang(\aapprox^*,
q')$. Because of previous result saying that
$\Lang(\aapprox^*)\supseteq \Lang(\A)$, from $s \in \Lang(\A,q)$ we
obtain that there exists a state $q'$ such that $s \in
\Lang(\aapprox^*, q')$. We know that $s \rightarrow^{*}_{\R} t$ hence,
what we have to show is that $t \in \Lang(\aapprox^*, q')$. By
induction on the length of $\rightarrow^{*}_{\R}$, we obtain that:
\begin{itemize}

\item if length is zero then $s \rightarrow^{*}_{\R} s$ and we
  trivially have that $s \in \Lang(\aapprox^*, q')$.
\item assume now that the property is true for any rewriting
  derivation of length less or equal to $n$, we prove that the
  property remains valid for a derivation of length less or equal to
  $n + 1$. Assume that we have $s \rightarrow^{n}_{\R} s'
  \rightarrow_{\R} t$. Using induction hypothesis, we obtain that $s'
  \in \Lang(\aapprox^*, q')$. It remains to prove that $t \in
  \Lang(\aapprox^*, q')$ can be deduced from $s' \rightarrow_{\R} t$.
  Since $s' \rightarrow_{\R} t$, we know that there exist a rewrite
  rule $l\rightarrow r \cond  c_1 \wedge \ldots \wedge c_n$, a position $p$ and a substitution $\mu
  : \mathcal{X} \mapsto \TF$ such that $s' = s'[l\mu]_{p}
  \rightarrow_{\R} eval(s'[r\mu]_{p})=t$ and for all $i \in [1,n]$,
  $c_{i}\mu = true$. Since $s' \in \Lang(\aapprox^*, q')$,
  $s'[l\mu]_{p} \rightarrow^{*}_{\aapprox^*} q'$ and by definition of
  the langage of an $\LTA$, we get that there exists $s''$ such that
  $s' \sqsubseteq s''$ and $s'' \rightarrow^{*}_{\aapprox^*} q'$. We
  can deduce that $s''[l\mu]_{p} \rightarrow^{*}_{\aapprox^*} q'$ and
  by definition of tree automata derivation, that there exists a state
  $q''$ such that $l\mu \rightarrow^{*}_{\aapprox^*} q''$ and
  $s''[q'']_{p} \rightarrow^{*}_{\aapprox^*} q'$.  Let $Var(l) =
  \{x_{1}, \ldots, x_{n} \}$, $l = l[x_{1}, \ldots, x_{n}]$ and
  $t_{1}, \ldots, t_{n} \in \TF$ such that $\mu = \{ x_{1}\mapsto
  t_{1}, \ldots, x_{n} \mapsto t_{n}\}$.  Since $l\mu = l[t_{1},
    \ldots, t_{n}] \rightarrow^{*}_{\aapprox^*} q''$, we know that
  there exist states $q_{1}, \ldots, q_{n}$ such that $\forall i \in
  [1,n]$, $t_{i} \rightarrow^{*}_{\aapprox^*} q_{i}$ and $l[q_{1},
    \ldots, q_{n}] \rightarrow^{*}_{\aapprox^*} q''$.  Let $\sigma =
  \{x_{1}\mapsto q_{1}, \ldots, x_{n} \mapsto q_{n}\}$, we thus have
  that $l\sigma \rightarrow^{*}_{\aapprox^*} q''$. Since $\aapprox^*$
  is a fixpoint of completion, from $l\sigma
  \rightarrow^{*}_{\aapprox^*} q''$ and the fact that for all $i \in
             [1,n]$, $c_{i}\mu = true$, we can deduce that $r\sigma
             \rightarrow^{*}_{\aapprox^*} q''$.  Furthermore, since
             $\forall i \in [1,n]$, $t_{i}
             \rightarrow^{*}_{\aapprox^*} q_{i}$, then $r\mu
             \rightarrow^{*}_{\aapprox^*} q''$.  Since besides of this
             $s''[q'']_{p} \rightarrow^{*}_{\aapprox^*} q'$, we have
             that $s''[r\mu]_{p} \rightarrow^{*}_{\aapprox^*} q'$.
             Since $s'\sqsubseteq s''$, this means by definition that
             $eval(s')\sqsubseteq eval(s'')$. Finally, since
             $s''[r\mu]_{p} \rightarrow^{*}_{\aapprox^*} q'$ and
             $eval(s')\sqsubseteq eval(s'')$, we can deduce that $t =
             eval(s'[r\mu]_{p}) \rightarrow^{*}_{\aapprox^*} q'$,
             hence $t \in \Lang(\aapprox^*, q')$.

\end{itemize}
\end{proof}

\noindent
Observe that the reverse does not hold as widening in evaluation may
introduce over-approximations.

\begin{remark}
We have two infinite dimensions, due to the state space, and due to infinite domain. The infinite behaviour of the system is abstracted thanks to the equations, and all the infinite behaviours due to the operations on elements of the lattice is captured by the widening step included in the evaluation step. Indeed, if we have lambda transitions added at each completion step with increasing (or decreasing) elements of the lattice (for example $[0,2] \rightarrow q$, $[2,4] \rightarrow q$, $[4,6] \rightarrow q$, \ldots), we have to perform a widening (here $[0,+\infty[$) to ensure the terminaison of the computation. But an infinite increasing (or decreasing) sequence of lambda transitions is necessarily obtained from a predifined operation of the lattice used in the rewrite rules. For example, the increasing sequence described above is necessarily obtained from a rewrite rule of the form  $u(\ldots, $\textbf{x}$, \ldots) \rightarrow v(\ldots, $\textbf{x + 2}$, \ldots)$. If we have the matching $x \rightarrow q_{1}$, and the rule $[2,2] \rightarrow q_{2}$, then it will add the transition $q_{1} + q_{2} \rightarrow q_{3}$, and since this rewrite rule leads to an infinite behaviour (always adding 2), we would have an infinite sequence $q_{3} + q_{2} \rightarrow q_{4}$, $q_{4} + q_{2} \rightarrow q_{5}$, and so on. To solve this problem, it is necessary to use an equation of the form $x = x + 2$. Then, $q_{1}$ is merged to $q_{3}$ and we have a transition $q_{1} + q_{2} \rightarrow q_{1}$ with an infinite evaluation abstracted thanks to the widening step included in the evaluation step. To summarize, an infinite sequence of lambda transitions is necessarily obtained from an operation used in the rewriting system, and since the transitions of an $\LTA$ containing operations have to be evaluated, the infinite behavior is always solved during the evaluation step. We can observe this on the example described hereafter in \ref{ss:example}.
\end{remark}

\section{A running example}
\label{ss:example}
Let $\mathbb{N}$ be the concrete domain, the set of intervals on
$\mathbb{N}$ be the lattice, $\R = \{ f(x) \rightarrow cons(x, f(x +
1)) \cond x < 3_{(A)}, f(x) \rightarrow cons(x, f(x + 2)) \cond x >
2_{(B)}\}$ be the $TRS$, $\A_{0}$ the $\LTA$ representing the set of
initial configurations, with the following set of transitions :
$\Delta_{0} = \{ [1,2] \rightarrow q_{1}, f(q_{1}) \rightarrow
q_{2}\}$, and $E = \{ x = x + 2 \cond x > 5 \}$ the set of equations.
We decide to perform a widening after three steps.\\

\noindent
\textbf{First step of completion}\\
\textit{One step completed automaton:} we can apply the rewrite rule $(A)$ with the substitution $x \mapsto q_{1}$, and so 
add $Norm(cons(q_{1}, f(q_{1} + 1)) \rightarrow q_{2}')$ and $q_{2}' \rightarrow q_{2}$ to $\Delta_{1}$.

So we have $\Delta_{2} = \Delta_{1} \cup \{ cons(q_{1}, q_{3})
\rightarrow q_{2}', q_{2}' \rightarrow q_{2}, f(q_{4}) \rightarrow
q_{3}, q_{1} + q_{[1,1]} \rightarrow q_{4}, q_{[1,1]} \rightarrow
[1,1] \}$.\\ Since there is new transitions, we have to perform the
evaluation step : transition $q_{1} + q_{[1,1]} \rightarrow q_{4}$ can
be evaluated, so $eval(\Delta_{2}) = \Delta_{2} \cup \{ [2,3]
\rightarrow q_{4} \}$.\\ \textit{Abstraction by merging states
  according to equations:} we cannot apply the set of equations yet
because there is no state recognizing "$x + 2$" such that $x > 5$.\\

\noindent
\textbf{Second step of completion}\\ \textit{One step completed
  automaton:} we can apply the rewrite rules $(A)$ and $(B)$ with the
substitution $x \mapsto q_{4}$, but this will be restricted by the
solver. In fact, $(A)$ will be applied on $[2,2]$ (condition $x < 3$),
and $(B)$ will be applied on $[3,3]$. So $Norm(cons([2,2], f([2,2] +
1)) \rightarrow q_{3}')$, $Norm(cons([3,3], f([3,3] + 2)) \rightarrow
q_{3}')$ and $q_{3}' \rightarrow q_{3}$ will be add to
$eval(\Delta_{2})$.

So we have $\Delta_{3} = eval(\Delta_{2}) \cup \{ [2,2] \rightarrow
q_{[2,2]}, cons(q_{[2,2]}, q_{5}) \rightarrow q_{3}', q_{3}'
\rightarrow q_{3}, f(q_{6}) \rightarrow q_{5}, q_{[2,2]} + q_{[1,1]}
\rightarrow q_{6}, [3,3] \rightarrow q_{[3,3]}, cons(q_{[3,3]}, q_{7})
\rightarrow q_{3}', f(q_{8}) \rightarrow q_{7}, q_{[3,3]} + q_{[2,2]}
\rightarrow q_{8} \}$.\\ Evaluation step: $eval(\Delta_{2}) =
\Delta_{2} \cup \{ [3,3] \rightarrow q_{6}, [5,5] \rightarrow q_{8}
\}$.  And as long as $[3,3] \rightarrow q_{[3,3]}$ and $[3,3]
\rightarrow q_{6}$, we can merge states $q_{[3,3]}$ and
$q_{6}$.\\ \textit{Abstraction step:} we cannot apply the set of
equations yet.\\

\noindent
\textbf{Third step of completion}\\ \textit{One step completed
  automaton:} we can apply the rewrite rule $(B)$ with the
substitution $x \mapsto q_{8}$.  So $Norm(cons(q_{8}, f(q_{8} + 2))
\rightarrow q_{7}')$, and $q_{7}' \rightarrow q_{7}$ will be add to
$Merge(eval(\Delta_{3}), q_{[3,3]}, q_{6})$.

So we have $\Delta_{3} = Merge(eval(\Delta_{3}), q_{[3,3]}, q_{6})
\cup \{cons(q_{8}, q_{9}) \rightarrow q_{7}', q_{7}' \rightarrow
q_{7}, f(q_{10}) \rightarrow q_{9}, q_{8} + q_{[2,2]} \rightarrow
q_{10} \}$.\\ Evaluation step: $eval(\Delta_{3}) = \Delta_{3} \cup \{
[7,7] \rightarrow q_{10} \}$.\\ \textit{Abstraction step:} As long as
$q_{8} + q_{[2,2]} \rightarrow q_{10}$, $[5,5] \rightarrow q_{8}$ and
$\gamma([5,5]) > 4$, $q_{8}$ and $q_{10}$ are merged according to the
set of equations $E$.\\

\noindent
\textbf{Fourth step of completion}\\ Let us see the full automaton at
this step.  We have $Merge(eval(\Delta_{3}), q_{8}, q_{10})) = \{
[1,2] \rightarrow q_{1}, f(q_{1}) \rightarrow q_{2}, cons(q_{1},
q_{3}) \rightarrow q_{2}', q_{2}' \rightarrow q_{2}, f(q_{4})
\rightarrow q_{3}, q_{1} + q_{[1,1]} \rightarrow q_{4}, q_{[1,1]}
\rightarrow [1,1], [2,3] \rightarrow q_{4}, [2,2] \rightarrow
q_{[2,2]}, cons(q_{[2,2]}, q_{5}) \rightarrow q_{3}', q_{3}'
\rightarrow q_{3}, f(q_{6}) \rightarrow q_{5}, q_{[2,2]} + q_{[1,1]}
\rightarrow q_{6}, [3,3] \rightarrow q_{6}, cons(q_{6}, q_{7})
\rightarrow q_{3}', f(q_{8}) \rightarrow q_{7}, q_{6} + q_{[2,2]}
\rightarrow q_{8}, [5,5] \rightarrow q_{8}, cons(q_{8}, q_{9})
\rightarrow q_{7}', q_{7}' \rightarrow q_{7}, f(q_{8}) \rightarrow
q_{9}, q_{8} + q_{[2,2]} \rightarrow q_{8}, [7,7] \rightarrow q_{8}
\}$.  Since the transitions have been modified thanks to the
equations, we have to perform an evaluation step.  We can nottice that
evaluation of the transition $q_{8} + q_{[2,2]} \rightarrow q_{8}$ is
infinite. In fact, it will add $[7,7] \rightarrow q_{8}$, $[9,9]
\rightarrow q_{8}$, $[11,11] \rightarrow q_{8}$, \ldots, and so on. So
we have to perform widening, that is to say, replace all the
transitions $\lambda \rightarrow q_{8}$ by $[5, +\infty[ \rightarrow
    q_{8}$.\\ \textit{One step completed automaton:} Thanks to the
    widening performed at the previous evaluation step, no more rule
    has to be add in the current automaton.  We have a fixed-point which
    is an over-approximation of the set of reachable states, and the
    completion stops.


\section{On Improving the Verification of Java Programs by TRMC}
\label{sec:java}
We now show how our formalism can simplify the analysis of JAVA
programs. In \cite{BoichutGJL-RTA07}, the authors developed a tool
called Copster~\cite{copster}, to compile a Java {\tt .class} file
into a Term Rewriting System (TRS). The obtained TRS models exactly a
subset of the semantics\footnote{essentially basic types, arithmetic,
  object creation, field manipulation, virtual method invocation, as
  well as a subset of the String library.} of the Java Virtual Machine
(JVM) by rewriting a term representing the state of the
JVM~\cite{BoichutGJL-RTA07}. States are of the form {\tt
  IO(st,in,out)} where {\tt st} is a program state, {\tt in} is an
input stream and {\tt out} and output stream. A program state is a
term of the form {\tt state(f,fs,h,k)} where {\tt f} is current frame,
{\tt fs} is the stack of calling frames, {\tt h} a heap and {\tt k} a
static heap. A frame is a term of the form {\tt frame(m,pc,s,l)} where
{\tt m} is a fully qualified method name, {\tt pc} a program counter,
{\tt s} an operand stack and {\tt t} an array of local variables. The
frame stack is the call stack of the frame currently being executed:
{\tt f}. For a given progam point {\tt pc} in a given method {\tt m},
Copster build a {\tt xframe} term very similar to the original {\tt
  frame } term but with the current instruction explicitly stated, in
order to compute intermediate steps.

One of the major difficulties of this encoding is to capture and
handle the two-side infinite dimension that can arise in Java
programs. Indeed, in such models, infinite behaviors may be due to
unbounded calls to method and object creation, or simply because the
program is manipulating unbounded datas such as integer
variables. While multiple infinite behaviors can be over-approximated
with completion (just like $a^nb^n$ can be approximated by $a^*b^*$),
this may require to manipulate structure of large size. As an example,
in \cite{BoichutGJL-RTA07}, it was decided to encode the structure of
configurations in an efficient manner, integer variables being encoded
in Peano arithmetic. Not only that this choice has an impact on the
size of the automata used to encode sets of configurations, but also
each classical arithmetic operation may require the application of
several rules.

As an example, let us consider the simple arithmetic operation $"300 +
400"$. By using \cite{BoichutGJL-RTA07}, this operation is represented
by $xadd(succ^{300}(zero), succ^{400}(zero))$, which reduces to $5$
rewriting rules detailled hereafter that have
to be applied $300$ times:\\
$xadd(zero,zero) \rightarrow result(zero)$\\
$xadd(succ(var(a)),pred(var(b))) \rightarrow xadd(var(a),var(b))$\\
$xadd(pred(var(a)),succ(var(b))) \rightarrow xadd(var(a),var(b))$\\
$xadd(succ(var(a)),succ(var(b))) \rightarrow xadd(succ(succ(var(a))),var(b))$\\
$xadd(pred(var(a)),pred(var(b))) \rightarrow xadd(pred(pred(var(a))),var(b))$\\
$xadd(succ(var(a)),zero) \rightarrow result(succ(var(a)))$\\
$xadd(pred(var(a)),zero) \rightarrow result(pred(var(a)))$\\
$xadd(zero,succ(var(b))) \rightarrow result(succ(var(b)))$\\
$xadd(zero,pred(var(b))) \rightarrow result(pred(var(b)))$\\

 This means that if at the program point {\tt pc} of method {\tt
  m} there is a bytecode {\tt add} then we switch to a {\tt xframe} in
order to compute the addition, i.e. apply $frame(m, pc, s, l)
\rightarrow xframe(add, m, pc, s, l) $. To compute the result of the
addition of the two first elements of the stack, we have to apply the
rule $xframe(add,m, pc, stack(b(stack(a,s))), l) \rightarrow
xframe(xadd(a,b), m, pc, s, l)$. Once the result is computed thanks to
all the rewrite rules of $xadd$, we can compute the next operation of
{\tt m}, i.e. go to the next program point by applying
$xframe(result(x), m, pc, s, l) \rightarrow frame(m, next(pc),
stack(x,s), l)$.

The use of $\LTA$ can drastically simplify the above
operations. Indeed, in our framework, we can encode natural numbers
and operations directly in the alpabet of the automaton. In such
context, the series of application of the rewritting rules is replaced
by a one step evaluation. As an example, the rewrite rule
$xframe(add,m, pc, stack(b(stack(a,s))), l) \rightarrow
xframe(xadd(a,b), m, pc, s, l)$ and rules $xadd$ encoding addition can
be replaced by $xframe(add,m, pc, stack(b(stack(a,s))), l) \rightarrow
xframe(result(a + b), m, pc, s, l)$.  Evaluation step of $\LTA$
completion will compute the result of addition of $a + b$ and add the
resulting term to the language of the automaton.

Other operations such as ``if-then-else'' can also be drastically
simplified by using our formalism. 
Indeed, with Peano numbers the evaluation of the condition of the instruction "{\tt if}" requires several rules. As an example, the instruction "{\tt if a=b then go to the program point x}" is encoded by the term $ifEqint(x,a,b)$, 
and the following rules will be applied:\\
$ifEqint(x,zero,zero) \rightarrow ifXx(valtrue,x)$\\
$ifEqint(x,succ(a),pred(b)) \rightarrow ifXx(valfalse,x)$\\
$ifEqint(x,pred(a),succ(b)) \rightarrow ifXx(valfalse,x)$\\
$ifEqint(x,succ(a),succ(b)) \rightarrow ifEqint(x,a,b)$\\
$ifEqint(x,pred(a),pred(b)) \rightarrow ifEqint(x,a,b)$\\
$ifEqint(x,succ(a),zero) \rightarrow ifXx(valfalse,x)$\\
$ifEqint(x,pred(a),zero) \rightarrow ifXx(valfalse,x$)\\
$ifEqint(x,zero,succ(b)) \rightarrow ifXx(valfalse,x)$\\
$ifEqint(x,zero,pred(b)) \rightarrow ifXx(valfalse,x)$\\

Rules of this type will disappear with $\LTA$ because an equality between two elements is directly evaluated, and so are all the predefined predicates.\\

In Copster, if at the program point {\tt pc} of the method {\tt m} we have 
an "{\tt if}" where the condition is an equality between two elements, we switch to a {\tt xframe} where the operation to evaluate is an "{\tt if}" with a equality condition between the two first elements of the stack, and which go to a program point $x$ if the condition is true. Then we can apply the rule $xframe(ifACmpEq(x),m,pc,stack(b,stack(a,s)),l) \rightarrow 
   xframe(ifEqint(x,a,b),m,pc,s,l)$ which permits to compute the solution, i.e. calls the $ifEqint$ rules detailed above. 

According to the result returned by these rules, we will go at program point $x$ if the condition is true or else to the next program point.
This is modelised by the two following rules:\\ 
 $xframe(ifXx(valtrue,x),m,pc,s,l) \rightarrow frame(m,x,s,l)$\\
$xframe(ifXx(valfalse,x),m,pc,s,l) \rightarrow frame(m,next(pc),s,l)$\\
   
%
%

In $\LTA$ completion, thanks to the fact that predicates are directly evaluated and that we have conditional rules, all this rules are replaced by the two following conditional rules:  $xframe(ifACmpEq(x),m,pc,stack(b,stack(a,s)),l) \rightarrow 
   frame(m,x,s,l) \cond a = b$ (if $a = b$ we go to program point p)\\
   $xframe(ifACmpEq(x),m,pc,stack(b,stack(a,s)),l) \rightarrow 
   frame(m,x,s,l) \cond a \neq b$ (if $a \neq b$ we go to next program point)

\section{Conclusion and Future work}

We have proposed $\LTA$, a new extension of tree automata for tree
regular model checking of infinite-state systems whose configurations
can be represented with interpreted terms. One of our main
contributions is the development of a new completion algorithm for
such automata. We also give strong arguments that our encoding can
drastically improve the verification of JAVA programs in a TRMC-like
environment. As a future work, we plan to implement the
simplifications of Section \ref{sec:java} in Copster and combine them
with abstraction refinement techniques.


\bibliography{LTA}

%

\end{document}